\theoremstyle{definition}
\newtheorem{lemma}{Lemma}
\newtheorem{theorem}{Theorem}
\newif\ifshowcomments
\definecolor{programs}{gray}{0.1}
\definecolor{keywords}{HTML}{204a87}
\definecolor{comments}{HTML}{8f5902}
\definecolor{strings}{HTML}{4e9a06}
\lstdefinestyle{standard}{
	escapechar=@,
	basicstyle=\ttfamily\small\color{programs},
	commentstyle = \ttfamily\color{comments},
	keywordstyle=\ttfamily\color{keywords},
	stringstyle=\color{strings},
	xleftmargin=5.0ex,
	columns=fullflexible,
	showstringspaces=false,
	showspaces=false,
    keepspaces=true,
    alsoletter={()[].=:}
}
\newcommand\code{\lstinline[basicstyle=\ttfamily\small,breakatwhitespace,xleftmargin=0pt]}
\begin{document}

\title{DafnyMPI: A Dafny Library for Verifying Message-Passing Concurrent Programs}

\newcommand{\tuftsa}{\affiliation{
  \institution{Tufts University}
  \city{Medford}
  \state{Massachusetts}
  \country{USA}
}}

\newcommand{\auca}{\affiliation{
  \institution{American University of Central Asia}
  \city{Bishkek}
  \country{Kyrgyzstan}
}}

\author{Aleksandr Fedchin}
\email{aleksandr.fedchin@tufts.edu}
\orcid{0000-0003-0810-1941}
\tuftsa
\auca

\author{Antero Mejr}
\email{antero.mejr@tufts.edu}
\orcid{0009-0000-8124-467X}
\tuftsa

\author{Hari Sundar}
\email{hari.sundar@tufts.edu}
\orcid{0000-0001-9001-5107}
\tuftsa

\author{Jeffrey S. Foster}
\email{jfoster@cs.tufts.edu}
\orcid{0000-0001-8043-1166}
\tuftsa

\begin{abstract}
The Message Passing Interface (MPI) is widely used in parallel, high-performance
programming, yet writing bug-free software that uses MPI remains difficult.
We introduce DafnyMPI, a novel, scalable approach to formally verifying MPI software.
DafnyMPI allows proving deadlock freedom, termination, and functional equivalence with simpler sequential implementations.
In contrast to existing specialized frameworks, DafnyMPI avoids custom concurrency logics and instead relies on Dafny, a verification-ready programming language used for sequential programs, extending it with concurrent reasoning abilities.
DafnyMPI is implemented as a library that enables safe MPI programming by requiring users to specify the communication topology upfront and to verify that calls to communication primitives such as \texttt{MPI\_ISEND} and \texttt{MPI\_WAIT} meet their preconditions.
We formalize DafnyMPI using a core calculus and prove that the preconditions suffice to guarantee deadlock freedom.
Functional equivalence is proved via rely-guarantee reasoning over message payloads and a system that guarantees safe use of read and write buffers. 
Termination and the absence of runtime errors are proved using standard Dafny techniques.
To further demonstrate the applicability of DafnyMPI, we verify numerical solutions to three canonical partial differential equations.
We believe DafnyMPI demonstrates how to make formal verification viable for a broader class of programs and provides proof engineers with additional tools for software verification of parallel and concurrent systems.
\end{abstract}

\begin{CCSXML}
<ccs2012>
   <concept>
       <concept_id>10011007.10011074.10011099.10011692</concept_id>
       <concept_desc>Software and its engineering~Formal software verification</concept_desc>
       <concept_significance>500</concept_significance>
       </concept>
   <concept>
       <concept_id>10011007.10010940.10010992.10010998.10010999</concept_id>
       <concept_desc>Software and its engineering~Software verification</concept_desc>
       <concept_significance>500</concept_significance>
       </concept>
   <concept>
       <concept_id>10011007.10010940.10010941.10010949.10010965.10010968</concept_id>
       <concept_desc>Software and its engineering~Message passing</concept_desc>
       <concept_significance>500</concept_significance>
       </concept>
   <concept>
       <concept_id>10003752.10010124.10010138.10010142</concept_id>
       <concept_desc>Theory of computation~Program verification</concept_desc>
       <concept_significance>300</concept_significance>
       </concept>
   <concept>
       <concept_id>10010147.10010169.10010175</concept_id>
       <concept_desc>Computing methodologies~Parallel programming languages</concept_desc>
       <concept_significance>300</concept_significance>
       </concept>
 </ccs2012>
\end{CCSXML}

\ccsdesc[500]{Software and its engineering~Formal software verification}
\ccsdesc[500]{Software and its engineering~Software verification}
\ccsdesc[500]{Software and its engineering~Message passing}
\ccsdesc[300]{Theory of computation~Program verification}
\ccsdesc[300]{Computing methodologies~Parallel programming languages}

\keywords{Message passing, deadlocks, rely-guarantee, Dafny, MPI}

\maketitle

\section{Introduction}

Computational science relies heavily on parallelism to achieve high performance and enable large-scale simulation, forecasting, and decision-making.
The correctness of parallel scientific software is critical, yet there are few practical formal verification techniques for such systems.
Model checking and dynamic verification approaches~\cite{isp,verifympi2,verifympi3,verifympifinal,civl} can suffer from the state explosion problem 
and typically require assuming a predetermined input and number of processes. 
Proof assistants, e.g., those using concurrent separation logic~\cite{iris,actris}, can perform arbitrarily complex reasoning, but verification of even small programs requires substantial effort by a team of specialists.
Existing approaches for solver-aided languages~\cite{leino,hamin} typically emphasize generality by supporting shared-memory concurrency with dynamic thread creation, but this generality can make it difficult to reason about the functional behavior of programs.
As a result, prior work typically focuses on relatively simple synchronization patterns, e.g., assuming that all receive operations are blocking and all send operations are not~\cite{leino}.

To address these limitations, we present a new, scalable approach to verifying programs that use the Message Passing Interface (MPI)~\cite{mpi-spec}. 
Our approach supports verification in the context of blocking, non-blocking, and collective communication behaviors. 
This is crucial in the MPI setting, where multiple messages may transfer concurrently and asynchronously relative to the initiating process and large messages can cause a send operation to nondeterministically block when buffering is unavailable.
We are able to support these behaviors at the cost of both requiring the user to specify the communication topology upfront and relying on the fact that MPI disallows dynamic process creation and assumes no shared memory.

We implement our approach as a library called DafnyMPI, for the Dafny programming language~\cite{dafny,dafny2}. 
Dafny is commonly used for the verification of sequential programs~\cite{aws,evm} but it lacks built-in concurrency support.
By integrating DafnyMPI as a library, we show that our approach can be easily integrated with an existing verification framework. 
This also allows us to preserve all the features already available in Dafny, making DafnyMPI more accessible to software engineers with no prior experience verifying parallel software.
Specifically, Dafny already enforces termination and prevents such runtime errors as division by zero.
Dafny automates much of the proof process, and the user guides verification by writing pre- and postconditions and invariants. 
DafnyMPI builds on top of this system by presenting an API that the program can invoke for inter-process communication.
Communication correctness, including deadlock freedom, is guaranteed by DafnyMPI if the program satisfies the preconditions on DafnyMPI method calls, and the postconditions allow reasoning about the return values.
(Section~\ref{sec:overview} introduces Dafny and DafnyMPI and demonstrates their use to reason about a parallelized numerical implementation of the linear convection partial differential equation.)

To prove that the specification of DafnyMPI guarantees deadlock freedom, we introduce a core calculus that closely models the behavior of key MPI primitives: barriers and non-blocking send and receive operations.
An ordering property on message tags ensures that a process blocked on the message with the lowest tag is eventually unblocked by a matching command from its communication partner.
Crucially, our operational semantics models the nondeterministic behavior of MPI send operations: such operations may either block until the receiving process initiates the communication on its end or, if the MPI runtime has enough memory available, the message may be transferred to an internal buffer and the sender may be allowed to proceed.
(Section~\ref{sec:proof} formally defines the core calculus and its operational semantics and outlines the proof of deadlock freedom.)

A user of DafnyMPI may not only want to ensure their program is correct but also to reason about its output.
Since concurrent code is error-prone, it is often useful to prove that the parallel version produces the same result as a simpler sequential one.
DafnyMPI makes this possible by requiring users to specify in advance the expected results of each inter-process communication operation and by enforcing that the read and write buffers are used safely.
Reasoning about functional equivalence between two manually written programs (one parallel and one sequential) can then proceed by transitivity: the user writes a functional specification and proves that both implementations satisfy it.
(Section~\ref{sec:func} describes how we can prove functional equivalence with DafnyMPI.)

To apply DafnyMPI to real-world programs, we use standard Dafny techniques to prove termination and address engineering issues ranging from practical MPI constraints, such as tag overflow, to proof engineering challenges, such as brittleness.
Additionally, we specify the behavior of several useful higher-level collective communication operations, which our core calculus abstracts away as decorated barriers. 
Overall, after dealing with all of the above, DafnyMPI consists of around 2000 lines of Dafny code, of which under 400 lines form the trust base that models our core calculus.
(Section~\ref{sec:engineering} discusses the structure of DafnyMPI and associated engineering considerations.)

To demonstrate the practical utility of DafnyMPI, we use it to verify numerical solutions for three partial differential equations (PDEs): Linear Convection, Heat Diffusion, and Poisson.
We discuss the proof engineering effort required to verify each benchmark and show that the increased complexity of the parallel versions is mirrored by a higher percentage of ghost (non-executable) code.
Finally, we compile the verified Dafny code to Python and demonstrate that the resulting programs exhibit the expected runtime behavior under parallel execution---specifically, that increasing the number of processes leads to faster computation.
(Section~\ref{sec:evaluation} describes these experiments.)

In summary, the main contributions of this paper are as follows:
\begin{itemize}
\item We introduce a technique for verifying concurrent MPI programs and establishing deadlock freedom, termination, and functional equivalence with a sequential implementation.
\item We implement this technique as DafnyMPI, a library for the Dafny programming language.
\item We verify three MPI programs to evaluate the practical benefits of the approach.
We show that the parallel implementations do exhibit better runtime performance than their sequential counterparts.
\end{itemize}

DafnyMPI and the associated benchmarks are available at  \href{https://doi.org/10.5281/zenodo.17102521}{https://doi.org/10.5281/zenodo.17102521} 

In summary, by verifying MPI programs in a language designed for sequential reasoning, we aim to push the boundaries of scalable verification and support the development of reliable software.

\section{Overview}
\label{sec:overview}

\begin{figure}[tb]
	\includegraphics[width=\linewidth]{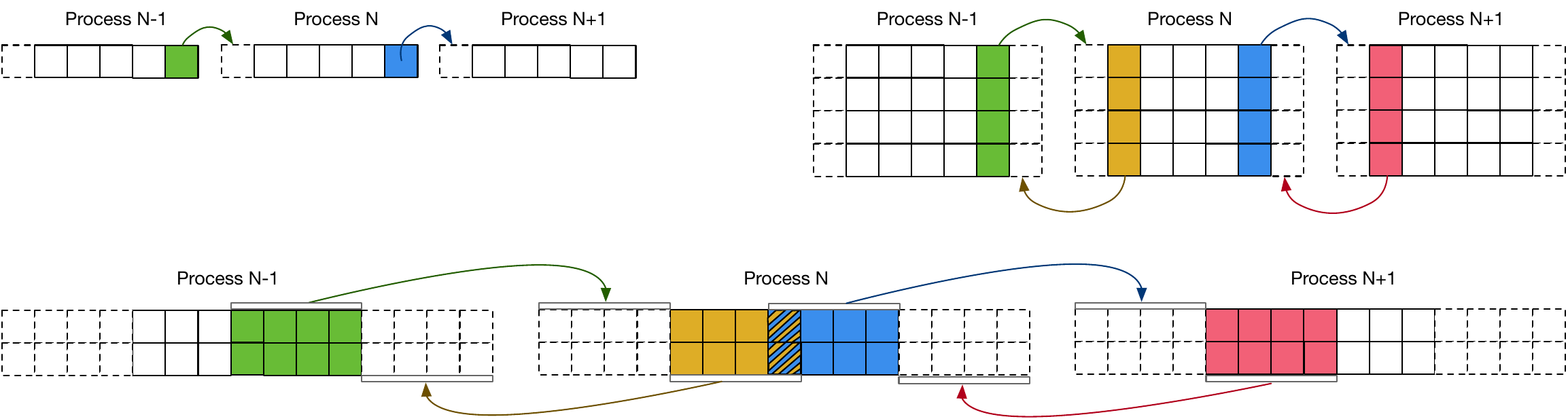}
	\caption{Structure of MPI communication for the Linear Convection (top left), Poisson (top right, transposed), and Heat Diffusion (bottom, transposed) PDE solver implementations.}
	\label{fig:comm}
\end{figure}

MPI is best suited for applications with well-defined communication topologies, where parallelism serves to optimize computation.
One of the key uses of MPI is in partial differential equation (PDE) solvers, where each process is responsible for computing a portion of the solution.
Figure~\ref{fig:comm} shows the structure of MPI communication for three PDE solvers that we use as our benchmarks. 
All three compute the solution iteratively, and each process must exchange data with its neighbors, as indicated by arrows in the figure.
In this section, we focus on the linear convection equation (top left in Figure~\ref{fig:comm}), which is the simplest of the three because each process only sends one value and only receives one value. 
The remaining two benchmarks are discussed in detail in Section~\ref{sec:evaluation}.
Here, we begin by outlining the mathematics behind the PDE to motivate concurrency and then present the corresponding code and discuss our verification methodology.

\subsection{Example: Linear Convection Partial Differential Equation}

The linear convection equation models a wave moving at a constant speed. 
More formally, it describes a quantity $u(x,t)$ that varies in space and time, and it states that the change over time ($\frac{\partial u}{\partial t}$) is proportional to the slope ($\frac{\partial u}{\partial x}$): 

\vspace{-0.5em}
\begin{equation}
\frac{\partial u}{\partial t} + c \frac{\partial u}{\partial x} = 0
\label{eq:lincon}
\end{equation}

To solve the equation, one must be able to find the value of $u$ at a given time step $t$ and point $x$ given the initial state $u_0$.
A numerical solution must approximate the partial derivatives in terms of small discrete steps in time ($\Delta t$) and space ($\Delta x$). 
One approach, known as the \emph{upwind} scheme, is to use the forward difference to approximate the temporal derivative and the backward difference to approximate the spatial derivative:

\vspace{-0.5em}
\begin{equation}
\frac{u(x,t+\Delta t) - u(x, t)}{\Delta t} + c \frac{u(x, t) - u(x - \Delta x, t)}{\Delta x} = 0
\label{eq:intermid}
\end{equation}

Solving Equation~\ref{eq:intermid} for $u(x,t+\Delta t)$, we get:

\vspace{-0.5em}
\begin{equation}
u(x,t+\Delta t) = u(x, t) - c \frac{\Delta t}{\Delta x}(u(x, t) - u(x - \Delta x, t))
\label{eq:upwind}
\end{equation}

In other words, the numerical solution iteratively computes $u(x,t+\Delta t)$ for all points in the spatial domain of $u$.
Note that this computation is local---the value at a point $x$ at time $t+\Delta t$ only depends on the values at points $x$ and $x-\Delta x$ at time $t$. 
The locality of the computation means it can be easily parallelized across several processes, where each process is responsible for computing a portion of the solution. 
However, the processes must also exchange information at each time step to compute the boundary values: if point $x-\Delta x$ is outside its allotted portion of its spatial domain, the process responsible for computing $u(x, t+\Delta t)$ must receive the value $u(x-\Delta x, t)$ from its neighbor.
This exchange of boundary values is shown with arrows between processes in Figure~\ref{fig:comm} (top left).
The Message Passing Interface is ideal for facilitating such structured communication.

\subsection{Solving PDEs with the Message Passing Interface} 

The Python code on the left side of Figure~\ref{fig:example} implements Equation~\ref{eq:upwind} using MPI. 
As is standard in MPI, the code is executed by launching it simultaneously across many CPU cores.
Each process begins by obtaining its rank and the total number of running processes, also known as the size (lines~\ref{line:world}--\ref{line:size}).
Next, each process computes the portion of the initial state it is responsible for (lines~\ref{line:init:start}--\ref{line:init:end}).
The initial conditions can vary; for simplicity, we do not further specify the corresponding function \texttt{init} here. 
The processes with adjacent ranks are responsible for adjacent portions of the spatial domain. 
We will say that process A is to the left of process B if it is responsible for the portion of the spatial domain immediately to B's left on the $x$-axis.

Each process then simulates the equation on \code{u} for \code{nt} time steps (lines~\ref{line:for:start}--\ref{line:barrier}). 
At each time step, the process updates its portion of \code{u} according to Equation~\ref{eq:upwind}, which is implemented by function \code{upwind} in the code (lines~\ref{line:convect:start}--\ref{line:convect:end}). 
Each process performs the computation locally for all points it is responsible for (lines~\ref{line:ifor:start}--\ref{line:ifor:end}) except for the leftmost boundary value (\code{u[0]} in the code), which depends on the information from the neighboring process. 
The leftmost process $0$ only needs to communicate the boundary value to the process on the right (lines~\ref{line:send0:start}--\ref{line:send0:end}), while the rightmost process $N-1$ only needs to receive one from the process on the left. (lines~\ref{line:recvN:start}--\ref{line:recvN:end}).
Every other process both receives a boundary value from the process on the left (lines~\ref{line:recv:start}--\ref{line:recv:end}) and sends one to the process on the right (lines~\ref{line:isend:start}--\ref{line:isend:end}).
Note that for efficiency, the code uses the \emph{immediate send} command (\code{Isend}), which allows the two operations to run concurrently---the immediate send operation is initialized before the receive but is not guaranteed to complete until after the \code{wait} call on line~\ref{line:wait}.
Notice also that each message is tagged with a unique integer derived from the time step and rank. 
We use these tags as a key part of the verification, as discussed below.
The other arguments shared by all send and receive operations are the rank of the destination or source process, respectively, and the buffer---passed by reference---from which the message is read (for sends) or to which it is written (for receives).

A round of communication completes with each process calling \code{Barrier} (line~\ref{line:barrier}), which synchronizes all processes before the next round of communication. 
After \code{nt} iterations, the \code{Gather} command collects all segments of \code{u} into a single array \code{out} managed by process 0 (lines~\ref{line:prep:start}--\ref{line:gather}).
Process 0 may then handle the output as needed, e.g., by saving it to disk or displaying a plot. 

\begin{figure}[tb]
\hspace{0.02\linewidth}\includegraphics[width=0.98\linewidth]{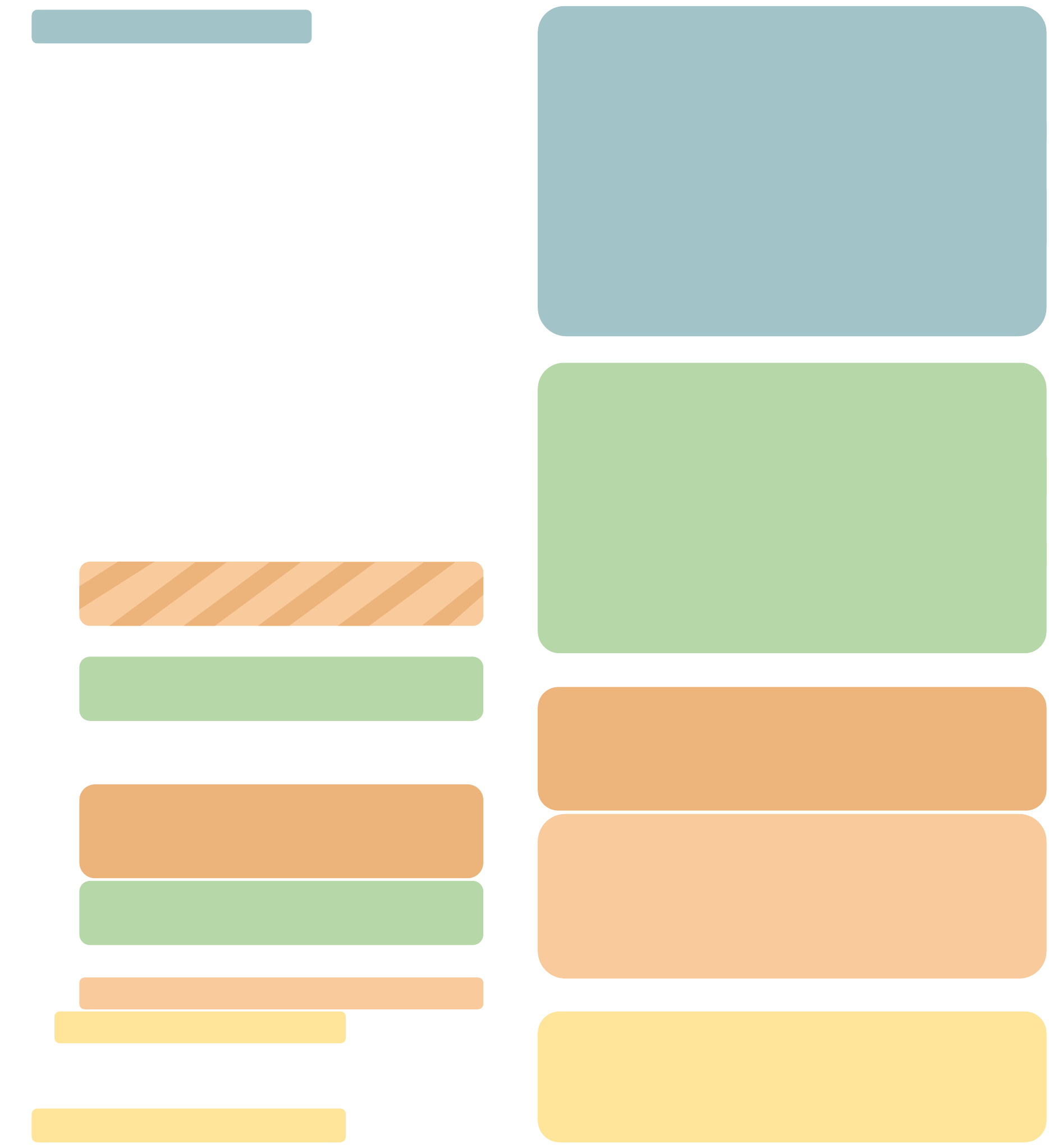}

\vspace{-15cm}
\begin{minipage}{0.48\linewidth}
\lstset{language=python}
\begin{lstlisting}[style=standard,numbers=left,stepnumber=1,firstnumber=auto,name=example]
c = MPI.COMM_WORLD@\label{line:world}@
rank = c.Get_rank()
N    = c.Get_size()@\label{line:size}@
D = MPI.DOUBLE
nt, dx = 25, 0.05@\label{line:init:start}@
# u is the portion of the domain 
# that the process is responsible for
u = init(rank, nx=40, dx=dx, size=N)@\label{line:init:end}@
def upwind(u, u_1, c=1, dt=.025):@\label{line:convect:start}@
  return u - c * dt / dx * (u - u_1)@\label{line:convect:end}@

for n in range(nt):@\label{line:for:start}@
  un = u.copy()
  bt = n * (N - 1)  # base tag
  for i in range(1, len(u)):@\label{line:ifor:start}@
    u[i] = upwind(un[i], un[i - 1])@\label{line:ifor:end}@
  if rank == 0:
    c.Send([un[-1:], D],@\label{line:send0:start}@ 
           dest=rank+1, tag=bt)@\label{line:send0:end}@
  elif rank == N - 1:
    c.Recv([u[0:1], D],@\label{line:recvN:start}@
           src=rank-1, tag=bt+rank-1)@\label{line:recvN:end}@
    u[0] = upwind(un[0], u[0])@\label{line:convect:N}@
  else:
    r = c.Isend([un[-1:], D],@\label{line:isend:start}@
                dest=rank+1,
                tag=bt+rank)@\label{line:isend:end}@
    c.Recv([u[0:1], D],@\label{line:recv:start}@ 
           src=rank-1, tag=bt+rank-1)@\label{line:recv:end}@ 
    u[0] = upwind(un[0], u[0])@\label{line:convect:all}@
    r.wait()@\label{line:wait}@
  c.Barrier()@\label{line:barrier}@
out = zeros(nx, dtype='d') \ @\label{line:prep:start}@
      if rank == 0 else u @\label{line:prep:end}@
c.Gather(u, out, root=0) @\label{line:gather}@
\end{lstlisting} 
\end{minipage}%
\begin{minipage}{0.52\linewidth}
\lstset{language=dafny,
  moredelim=**[is][\color{gray}]{<gray>}{</gray>}}
\begin{lstlisting}[style=standard]
c := new MPI.World(
  N,    // size / # of processes
  Msg,  // (tag, N) -> ...,
  Src,  // (tag, N) -> tag % (N - 1)
  Dest, // (tag, N) -> tag % (N - 1) + 1,
  Clct, // (id,  N) -> if id < nt 
  // then Barrier(tag=id * (N - 1)) 
  // else Gather(tag=nt * (N - 1)...)
  LastClct); // (N) -> nt + 1
tag, clct := -1, 0

assert rank-1 == Src(bt, N) 
 && rank == Dest(bt, N) 
 && NoBetween(tag, bt+rank-1) 
 && bt+rank-1 > tag
 && CanWrite(u[0:1])
 <gray>&& Clct(clct, N).tag <= bt+rank-1 
 && bt+rank-1 < Clct(clct + 1, N).tag...</gray>
tag := bt+rank-1
assume u[0:1] == Msg(bt+rank-1, N)...
 
assert rank == Src(bt, N) 
 && rank+1 == Dest(bt, N) 
 && un[-1:] == Msg(bt, N) 
 && CanRead(un[-1:])
 && NoBetween(tag, bt+rank) 
 && bt+rank > tag
 <gray>&& Clct(clct, N).tag <= bt+rank 
 && bt+rank < Clct(clct + 1, N).tag...</gray>
tag := bt+rank

<gray>assert Clct(clct + 1).tag > tag
 && NoBetween(tag, Clct(clct + 1).tag)...
clct := clct + 1
assume clct==LastClct(N) ==> Done()...</gray>
\end{lstlisting}
\end{minipage}

\caption{Example Python MPI program (left) and corresponding Dafny proof obligations (right, simplified). Proof obligations that deal with collective operations are grayed out to highlight the point-to-point communication.}
\label{fig:example}
\end{figure}

\subsection{DafnyMPI} 
\label{ref:overview:dafny}

The goal of this paper is to develop a methodology for proving that MPI programs such as the one in Figure~\ref{fig:example} are correct, by which we mean that they must terminate successfully with a result that is either identical to that of a matching sequential implementation or equivalent up to rounding differences caused by the non-associative nature of floating-point arithmetic and similar machine-level approximations.
We treat sequential implementations as reference implementations since they are typically easier to write and are therefore less likely to contain bugs. They are also simpler to debug. 
Full program correctness, then, is contingent on the following properties: 
\begin{itemize}     
\item \emph{Deadlock Freedom}. Any process waiting on a send (receive) is eventually unblocked by the matching receive (send), and any process waiting at a barrier eventually passes that barrier. 
We present a formal proof of deadlock freedom in Section~\ref{sec:proof}.
\item \emph{Functional Equivalence}. The MPI program must compute the same result as a corresponding sequential implementation, which is manually written by the programmer and acts as a reference implementation. 
In our benchmarks, each process computes a portion of the final solution, and all such portions are then merged and returned by the root process (rank zero).
DafnyMPI allows reasoning about the output of each process because it enforces the safe use of read and write buffers and ensures that all messages conform to the user-provided specification.
We discuss these properties and the methodology for proving functional equivalence in Section~\ref{sec:func}.
\item \emph{Termination}. Every process will eventually terminate. Aside from deadlock freedom, which we list separately, termination requires the absence of infinite loops, endless recursion, and runtime errors such as division by zero.
Dafny enforces all these properties by default, and so we use the standard techniques for proving them.
Section~\ref{sec:engineering} discusses termination and related proof-engineering considerations such as proof brittleness.
\end{itemize}

\paragraph{Dafny}
We choose to prove these properties in Dafny, a verification-aware language that enforces termination and the absence of runtime errors, including division by zero, out-of-bounds array access, and null pointer dereference.
Dafny automates much of the proof process by applying weakest-precondition calculus to generate verification conditions, which are then discharged to an SMT solver.
The user of Dafny may write pre- and postconditions and invariants to guide the proof process and to enforce additional properties.
In particular, functional equivalence between a pair of methods can be established transitively by requiring both to satisfy the same functional specification in their postconditions.

Thus, given a Python program, we manually translate it into an equivalent Dafny program. 
We then parallelize the computation by rewriting it with DafnyMPI and prove that the resulting version is functionally equivalent to the original.
After verification, both Dafny programs can be compiled back into Python for execution.

The key challenge with this approach is that Dafny is a single-threaded language, with no built-in notion of parallelism or concurrency.
Thus, DafnyMPI is designed with the meta-property that, if the proof obligations of its MPI primitives are met, then the parallel execution of the program is guaranteed to satisfy the properties listed above.

\paragraph{DafnyMPI proof obligations for point-to-point communication}
The right side of Figure~\ref{fig:example} shows a selection of the DafnyMPI proof obligations, written as Dafny code, for our example, which we discuss next.
Here and in Section~\ref{sec:proof}, we focus on deadlock freedom, as that is the most intricate property to prove.
In the figure, the correspondence between source code and proof obligations is shown by color.
In one case, a program statement corresponds to multiple proof obligations, which is shown with appropriately colored stripes.
We gray out proof obligations that deal with collective operations to keep the focus on the point-to-point communication for now; we return to collective operations later in this section.

At the start of the program, corresponding to line~\ref{line:world} on the left, the code creates a new \code{MPI.World} instance that must include programmer-supplied specifications of the overall communication pattern of the program.
Specifically, the programmer provides the \emph{size} \code{N}, the total number of processes---which, as we do here, can be left unconstrained to show the proof holds for any number of processes greater than one---and a list of functions describing the program's use of point-to-point messages and collective operations.
The first three functions, \code{Msg}, \code{Src}, and \code{Dest}, describe point-to-point messages, mapping message tags and the size to the contents of the message, the message source, and the message destination, respectively.
For our example, we omit the message contents for conciseness, and the source and destination are computed from the tag and the size.
The next two functions, \code{Clct} and \code{LastClct}, describe the program's collective operations, including barriers.
\code{LastClct} specifies the total number of collective operations in the program, which, in the general case, is a function of the number of processes.
In our example, there are always exactly \code{nt} barriers---one per loop iteration (lines~\ref{line:for:start}--\ref{line:barrier})---and one \code{Gather} call on line~\ref{line:gather}, so \code{LastClct} is constant and is set to \code{nt+1}.
The \code{Clct} function maps each collective operation ID to its description, including the operation type (e.g., \code{Barrier} or \code{Gather}) and the lowest tag---across all processes---of any point-to-point message that immediately follows it.
This allows us to order collective operations, which otherwise cannot be assigned tags, relative to point-to-point messages.
Finally, the variable \code{tag} is initialized to $-1$ to track the tag of the last completed message, and \code{clct} is initialized to $0$ to track the ID of the next upcoming collective operation.

Next, the code corresponding to the two \code{Recv} calls on lines~\ref{line:recvN:start}--\ref{line:recvN:end}~and~\ref{line:recv:start}--\ref{line:recv:end} asserts that, in order,
the source and destination of the message match the tag, 
no message was skipped (the \code{NoBetween(a,b)} predicate holds if and only if there is no tag between \code{a} and \code{b} for which the current process is the sender or the receiver),
the current tag is greater than the one previously used,
the destination buffer (\code{u[0:1]}) can be written to,
and the tag falls between the previous and the next collective operations (in our case, barriers).
Once the command executes, the process's \code{tag} counter is updated to the message's tag. 
Finally, the snippet assumes (the rely part of rely--guarantee reasoning) that \code{recv} returns the message \code{Msg(bt+rank-1,N)} corresponding to the tag in the specification.

The code corresponding to the \code{Send} call on lines \ref{line:send0:start}--\ref{line:send0:end} similarly begins by asserting that the source, destination, and contents of the message match the tag (the guarantee part of rely--guarantee reasoning), and that the source buffer is free to be read from.
Next, the code enforces the same tag ordering properties as for the \code{Recv} call.
After the send completes, the process's \code{tag} counter is incremented.
Whenever a send is non-blocking, such as on lines~\ref{line:isend:start}--\ref{line:wait}, these different conditions are enforced at different points along the program's execution.
The ordering properties on the tag are enforced at the corresponding \code{wait} call (line~\ref{line:wait}) that completes the send, while the rest of the conditions are checked immediately when the point-to-point communication is initialized (lines~\ref{line:isend:start}--\ref{line:isend:end}).
Additional pre- and postconditions, not shown in the figure, ensure that the process does not send a duplicate message and that the send buffer is not written to between the calls to \code{ISend} and \code{wait}.

To understand why the conditions listed above prevent deadlock, suppose, for a contradiction, that deadlock does occur and that, among all the blocked processes, process $P$ is blocked on an MPI call with the lowest tag.
The matching process for that message must not have initiated the communication with $P$ on its own end yet, since otherwise $P$ would eventually unblock.
Therefore, the matching process must not have reached the point in the program at which it initiates communication with $P$.
At the same time, there is a contradiction, because the matching process cannot be blocked at this point in the program, since then it would be the process blocked on the message with the lowest tag. 
Therefore, the matching process must be running and so there is no deadlock. (Section~\ref{sec:proof} formalizes this argument.)

\paragraph{Collective operations.}
The final block of code on the right side of Figure~\ref{fig:example} deals with collective operations and corresponds to the \code{Barrier} call on line~\ref{line:barrier} and the \code{Gather} call on line~\ref{line:gather}. 
The code asserts that no barrier or collective operation was skipped and that the barrier is called only after any message that must precede it.
The ordering of collective operations relative to point-to-point messages is additionally enforced on every call to \code{Recv}, \code{Send}, or \code{wait}, as shown by the grayed-out proof obligations in the relevant code snippets.

At each call to \code{Barrier} or \code{Gather}, the ID of the next collective operation is incremented.
When this ID reaches \code{LastClct(N)}, Dafny can assume \code{Done()}, a special condition indicating that the program has completed its execution.
The user of the DafnyMPI library must have a postcondition on the main method that checks that \code{Done()} is true. 
In our example, this condition is only fulfilled after the call to \code{Gather} on line~\ref{line:gather}.
This call has additional pre- and postconditions attached to it (not shown in the figure) that ensure the inputs conform to the specification provided by the user, e.g., that \code{u} holds part of the output corresponding to the process's \code{rank}.

Together, these conditions guarantee deadlock freedom and allow the user to prove that the variable \code{out} will, for the zeroth process, contain the solution to the PDE after \code{nt} time steps.
The next section formalizes this approach for a core calculus where we prove that these conditions are sufficient. 

\section{Formalizing DafnyMPI}
\label{sec:proof}

In this section, we formalize DafnyMPI on a core calculus and prove that for a given program, if DafnyMPI's requirements are satisfied, then that program will be deadlock-free when run in parallel.
Our core calculus adds several primitives from the MPI~1.0 standard~\cite{mpi-spec} to a language of commands.
As far as we are aware, later MPI standards maintain the semantics of these primitives, so our approach would still apply.
Moreover, in our experience, most MPI programs are written in a way that does not require the more advanced features of later standards, such as MPI~2.0 or MPI~3.0, which introduce additional features such as one-sided communication and dynamic process management.

DafnyMPI directly encodes the following four MPI functions:
\begin{itemize}
	\item \texttt{MPI\_IRECV} (\emph{immediate receive} or \emph{non-blocking receive}) requests to receive a message with a given tag from a given source. 
    The operation returns immediately, and the message may then be received in the background.
    The operation returns an \texttt{MPI\_REQUEST} object that can be waited on to ensure that the message has been received. 
    \item \texttt{MPI\_ISEND} (\emph{immediate send} or \emph{non-blocking send}) requests to send a message with a given tag to a given destination. 
    The operation returns immediately with an \texttt{MPI\_REQUEST} object that can be waited on to ensure the send has been completed. 
    Note that completion is local in this context: it does not guarantee that the message is received at the destination, only that the sender can reuse the buffers associated with the operation.
    \item \texttt{MPI\_WAIT} (\emph{wait}), called on an  \texttt{MPI\_REQUEST} object, blocks a process until the corresponding communication has been completed.  
    \item \texttt{MPI\_BARRIER} (\emph{barrier}) blocks until all processes also reach a barrier.	
\end{itemize}

In Section~\ref{sec:engineering}, we discuss how we can extend the core calculus
to support MPI functions that are available in DafnyMPI, using syntactic sugar
or as simple variations of the initial four.
\begin{itemize} 
    \item \texttt{MPI\_RECV} (\emph{blocking receive}) can be encoded as an \texttt{MPI\_IRECV} followed immediately by the relevant \texttt{MPI\_WAIT}. 
    \item \texttt{MPI\_SEND} (\emph{blocking send}) can be encoded similarly with \texttt{MPI\_ISEND} and \texttt{MPI\_WAIT}.
    \item \texttt{MPI\_GATHER} must be called by all processes, after which the data associated with each process is transferred to the designated root process and stored in a single array. 
We encode this operation as a decorated barrier call, requiring that all processes are synced before exchanging information in this way.
   \item \texttt{MPI\_ALLREDUCE} must be called by all processes, after which the data submitted by each process is folded using a binary operation such as summation or logical and.
   	The result is then communicated back to all processes. 
   	This operation is encoded as a decorated barrier similarly to \texttt{MPI\_GATHER}.
   	\end{itemize}

DafnyMPI does not currently support MPI \emph{communicators}, which group processes together so that, e.g., only a group needs to pass a barrier, or \emph{wildcards}, which allow one process to exchange a message with an arbitrary process associated with the same communicator.
We leave support for these features to future work.
Instead, DafnyMPI programs use the single default \texttt{MPI\_COMM\_WORLD} communicator on all operations.

\subsection{The Core Calculus}
\label{sec:proof:core}

Figure~\ref{fig:grammar} shows the syntax of our core calculus, which is stratified into expressions and commands.
As our proof focuses on interprocess communication, expressions~$e$ are limited to integers~$n$, variables~$x$, comparisons $e~\textbf{=}~e$, and basic arithmetic operations $e~\textbf{+}~e$, $e~\textbf{*}~e$, $e~\textbf{div}~e$, $\textbf{-}e$.
Because there is no shared memory, each process has its own variable environment, and there is no way for one process to affect the evaluation of an expression by another process.
However, an expression may not read a variable that is being written by a background receive call (see Section~\ref{sec:proof:axioms}).

Commands~$c$ include four communication primitives corresponding to the MPI functions listed above: $\textbf{irecv}~e~x$, non-blocking receive of a message with tag $e$, where the message is written to $x$; $\textbf{isend}~e~x$, non-blocking send of a message with tag $e$, where the message is read from $x$; $\textbf{wait}~e$, waiting for completion of the send or receive  with tag $e$; and $\textbf{barrier}$, which blocks until all processes reach a barrier.
In all cases, tags are integers.
The remaining commands---\textbf{skip}, \textbf{if}, \textbf{while}, \textbf{set}, and sequencing---have the standard semantics, where zero is considered false and non-zero integers are considered true.

\begin{wrapfigure}[10]{r}{0.35\linewidth}
\vspace{-1em}
\centering
\begin{minipage}{\linewidth}
\begin{displaymath}
\begin{array}{lll}
e &   ::=  & n
      \mid   x
      \mid   e~\textbf{=}~e 
      \mid   e~\textbf{+}~e 
      \mid   \textbf{-}e \\
 &    \mid & e~\textbf{*}~e
      \mid   e~\textbf{div}~e \\
c &   ::=  & \textbf{irecv}~e~x
      \mid   \textbf{isend}~e~x \\
    & \mid & \textbf{wait}~e
      \mid   \textbf{barrier} \\
    & \mid & \textbf{skip}
      \mid   \textbf{if}~e~c~c \\
    & \mid & \textbf{while}~e~c
      \mid   \textbf{set}~x~e
      \mid   c~\textbf{;}~c
\end{array}
\end{displaymath}
\caption{Core calculus syntax.} 
\label{fig:grammar}
\end{minipage}
\end{wrapfigure}

Our semantics models the state of a single process as a tuple $\langle c, \sigma, t, b\rangle$, where $c$ is the command to be executed, $\sigma$ is the variable environment mapping variables to integers, $t$ is the largest tag previously used in a \textbf{wait} call, and $b$ is the number of barriers the process has passed so far.
Then the global program state $S$, encompassing all processes, is a tuple $(\mathcal{P}, \mathcal{B}_r, \mathcal{B}_s, \mathcal{B}_m)$, where $\mathcal{P}$ is a set of process states, and $\mathcal{B}_r$, $\mathcal{B}_s$, and $\mathcal{B}_m$ are the global receive, send, and message buffers, respectively.
Throughout this section, we will also use $N=|\mathcal{P}|$ to refer to the number of processes.

The receive and send buffers are maps from message tags to variable names, indicating the destination (for receives) or source (for sends) of the message within the process environment. 
The message buffer is a map from message tags to message payload, which is the value that would eventually be copied from the source variable in the sender's environment to the destination variable in the receiver's environment.
This intermediate payload buffer is necessary because MPI allows a send operation to complete before the corresponding receive is called, in which case the message contents must be buffered internally, while the sender is allowed to reuse the source variable.
It is sufficient to model the buffers globally because the programmer is required to provide functions that uniquely identify the sender and receiver for each message tag as part of the program's specification. 
Section~\ref{sec:proof:opsem} goes through a concrete example of how the buffers are used.

Our operational semantics is nondeterministic and may take a step in any process that is not blocked.
Thus, we treat the running processes as a set and use the set notation for convenience.
In particular, we write $\langle c, \sigma, t, b\rangle \cup \mathcal{P}$ (omitting curly braces around the singleton set) to select one process $\langle c, \sigma, t, b\rangle$ with the remaining processes as $\mathcal{P}$, and we write $\mathcal{P}_1 \cup \mathcal{P}_2 \cup \ldots$ to nondeterministically split the set of processes into subsets.
We store the process rank by binding an integer to \texttt{rank} in $\sigma$.
As a shorthand, we use a subscript on the process state to indicate rank, i.e., $\langle c, \sigma, t, b\rangle_i$ indicates a state where $\sigma(\texttt{rank}) = i$.
We omit the subscript when the rank is irrelevant.

A state reduction is a sequence of states $S_0, S_1, \ldots, S_k$ such that each preceding state transitions to the next one according to the operational semantics (Section~\ref{sec:proof:opsem}).
As another shorthand, we sometimes refer to the state components of the $i$th process during the $k$th reduction step by subscripting with $i,k$, e.g., $\sigma_{2,0}$ refers to the variable environment of the process with rank $2$ in the initial state $S_0$.

DafnyMPI programs are \emph{single instruction, multiple data} (SIMD), with the same command $c$ executed on all nodes.
The initial state of the execution of command $c$ on $N$ nodes is given by:
\[
S_0 = (\bigcup_{0\leq i < N}\langle c;~\textbf{barrier}, \{\texttt{rank}\mapsto i, \texttt{size}\mapsto N\}, -1, 0\rangle, \emptyset, \emptyset,\emptyset)
\]
Notice the processes are initialized with variables \texttt{rank} and \texttt{size} to let the process know its rank and the total number $N$ of processes running.
Each process is also initialized with the exact same sequence of commands to execute, where the last command must be a \textbf{barrier}.

We verify a given program with respect to a user-provided specification, where, as in Section~\ref{sec:overview}, the user must describe the overall communication pattern of the program ahead of time. 
Specifically, the user provides five functions below, which we will refer to throughout this section and which together describe all MPI operations.
Note that all five functions are parameterized only by process-independent quantities, so any two processes will always compute the same values:
\begin{itemize}
	\item \textsc{Sender} and \textsc{Receiver} are total functions that map a message tag and the number of processes to the rank of the process that sends or receives a message with that tag. 
	In our running example in Figure~\ref{fig:example}, these correspond to functions \code{Src} and \code{Dest}, respectively.
	\item \textsc{Message} (\code{Msg} in Figure~\ref{fig:example}) is a total function mapping a message tag and the number of processes to the message payload, which in our core calculus semantics may only be an integer. 
	\item \textsc{BarrierCount} (\code{LastClct} in Figure~\ref{fig:example}) maps the number of processes to the total number of barriers a process must pass before terminating.
	\item \textsc{Barrier} is a total function that maps the rank of a barrier and the number of processes to a message tag, indicating that any message with a smaller tag must be sent and received before the process reaches the barrier.
	This function roughly corresponds to function \code{Clct} in Figure~\ref{fig:example}, although DafnyMPI supports collective operations not explicitly modeled by the core calculus, as we describe in Section~\ref{sec:engineering}.
\end{itemize}

%Next, we present the operational semantics of our core calculus and the guarantees that are provided by Dafny. 

\subsection{Operational Semantics}
\label{sec:proof:opsem}

This section presents the small-step operational semantics of our core calculus.
The goal is to model the behavior of message-passing programs as closely as possible to the MPI standard, so that the proof of deadlock freedom that follows applies in all possible situations.
One design choice that distinguishes our semantics from much of prior work is that we allow send operations to block nondeterministically.
In prior work (e.g.,~\cite{leino}), send operations are typically assumed to be non-blocking, which simplifies reasoning but does not capture the scenario, easily triggered by large message payloads, in which the MPI implementation temporarily runs out of internal buffer space.
To model this behavior, our semantics allows a message to be transferred from the sender to the shared message buffer $\mathcal{B}_m$ in one of two ways:
(i) eagerly, before the sender reaches a \textbf{wait} call (representing a normal immediate send), or
(ii) lazily, only when the receiver executes its own \textbf{wait} (representing an immediate send that blocks).
The remainder of this section describes these rules.

Figure~\ref{fig:reductions} gives the small-step operational semantics, divided into two groups: (\subref{fig:reductions:single}) local rules that touch a single process, and (\subref{fig:reductions:multi}) global rules that also manipulate shared buffers or several processes at once.
Our semantics also includes big-step expression evaluation, $\langle e,\sigma\rangle \Downarrow v$, meaning evaluating expression $e$ in variable environment $\sigma$ yields value $v$.
As expression evaluation is entirely standard, we omit these rules. 

The local rules, shown in Figure~\ref{fig:reductions:single}, define the single stepping relation $\rightsquigarrow$ on process states $\langle c,\sigma, t, b\rangle$ meaning taking one step in the process whose state is on the left yields the state on the right.
\textsc{IfTrue} and \textsc{IfFalse} cover the two cases for conditionals, and \textsc{While} expands one step of the iteration.
Finally, \textsc{Set} updates the variable environment appropriately, and \textsf{SeqSkip} removes a \textbf{skip} at the start of a sequence.
A process whose command is a lone \textbf{skip} is considered to have terminated, and thus there is no rule to evaluate it further.

\begin{figure}
\begin{subfigure}{\textwidth}
\begin{mathpar}
\small
\inferrule*[right=IfTrue]
{\langle e, \sigma \rangle \Downarrow v\quad v\neq 0}
{\langle \textbf{if}~e~c_1~c_2,\sigma,t,b\rangle
\rightsquigarrow
\langle c_1,\sigma,t,b\rangle}

\inferrule*[right=IfFalse]
{\langle e, \sigma \rangle \Downarrow 0}
{\langle \textbf{if}~e~c_1~c_2,\sigma,t,b\rangle
\rightsquigarrow
\langle c_2,\sigma,t,b\rangle}

\inferrule*[right=While]
{ }
{\langle \textbf{while}~e~c,\sigma,t,b\rangle
\rightsquigarrow
\langle \textbf{if}~e~(c;~\textbf{while}~e~c)~\textbf{skip},\sigma,t,b\rangle}

\inferrule*[right=Set]
{\langle e, \sigma \rangle \Downarrow v}
{\langle\textbf{set}~x~e,\sigma,t,b\rangle
\rightsquigarrow
\langle\textbf{skip},\sigma[x\mapsto v],t,b\rangle}

\inferrule*[right=SeqSkip]
{ }
{\langle \textbf{skip};~c,\sigma,t,b\rangle
\rightsquigarrow
\langle c,\sigma,t,b\rangle}
\end{mathpar}
\caption{Single-Process Operational Semantics Rules.}
\label{fig:reductions:single}
\end{subfigure}

\bigskip{}

\begin{subfigure}{\textwidth}
\begin{mathpar}
\small

\inferrule*[right=Proc]
{\langle c,\sigma,t,b\rangle
\rightsquigarrow
\langle c',\sigma',t',b'\rangle}
{(\langle c,\sigma,t,b\rangle\cup\mathcal P,\mathcal B_r, \mathcal B_s, \mathcal B_m)
\longrightarrow
(\langle c',\sigma',t',b'\rangle\cup\mathcal P,\mathcal B_r, \mathcal B_s, \mathcal B_m)}

\inferrule*[right=SeqStep]
{(\langle c_1,\sigma,t,b\rangle \cup \mathcal P, \mathcal B_r, \mathcal B_s, \mathcal B_m)
\longrightarrow
(\langle c',\sigma',t',b'\rangle \cup \mathcal P, \mathcal B'_r, \mathcal B'_s, \mathcal B'_m)}
{(\langle c_1\textbf{; }c_2,\sigma,t,b\rangle \cup \mathcal P, \mathcal B_r, \mathcal B_s, \mathcal B_m)
\longrightarrow
(\langle c'\textbf{; }c_2,\sigma',t',b'\rangle \cup \mathcal P, \mathcal B'_r, \mathcal B'_s, \mathcal B'_m)}

\inferrule*[right=Send]
{\langle e,\sigma\rangle\Downarrow t'}
{(\langle\textbf{isend}~e~x,\sigma,t,b\rangle\cup\mathcal P,\mathcal B_r, \mathcal B_s, \mathcal B_m)
\longrightarrow
(\langle\textbf{skip},\sigma,t,b\rangle\cup\mathcal P,\mathcal B_r, \mathcal B_s[t'\mapsto x], \mathcal B_m)}

\inferrule*[right=Recv]
{\langle e, \sigma \rangle \Downarrow t'}
{(\langle\textbf{irecv}~e~x,\sigma,t,b\rangle\cup\mathcal P,\mathcal B_r, \mathcal B_s, \mathcal B_m)
\longrightarrow
(\langle\textbf{skip},\sigma,t,b\rangle\cup\mathcal P,\mathcal B_r[t'\mapsto x], \mathcal B_s, \mathcal B_m)}

\inferrule*[right=WaitRecv]
{\langle e, \sigma \rangle \Downarrow t'\quad
t'\in \mathcal B_r\quad
t' \in \mathcal B_m\quad}
{(\langle\textbf{wait}~e,\sigma,t,b\rangle_i\cup\mathcal P,\mathcal B_r, \mathcal B_s, \mathcal B_m)
\longrightarrow
(\langle\textbf{skip},\sigma[\mathcal B_r(t')\!\mapsto\!\mathcal B_m(t')],t',b\rangle_i\cup\mathcal P,\mathcal B_r\!\setminus\!\{t'\}, \mathcal B_s, \mathcal B_m)}

\inferrule*[right=WaitSend]
{\langle e, \sigma \rangle \Downarrow t'\quad
t' \in \mathcal{B}_m}
{(\langle\textbf{wait}~e,\sigma,t,b\rangle_i\cup\mathcal P,\mathcal B_r, \mathcal B_s, \mathcal B_m)
\longrightarrow
(\langle\textbf{skip},\sigma,t',b\rangle_i\cup\mathcal P,\mathcal B_r, \mathcal B_s\!\setminus\!\{t'\}, \mathcal B_m)}

\inferrule*[right=TransferOnWait]
{
\langle e, \sigma \rangle \Downarrow t' \quad
\textsc{Receiver}(t',N) = i\quad
t' \in \mathcal B_s\quad 
\mathcal B_s(t') = x\quad 
\langle x,\sigma_{\textsc{Sender}(t',N)} \rangle \Downarrow v}
{(\langle \textbf{wait}~e, \sigma, t, b\rangle_i \cup \mathcal P,\mathcal B_r, \mathcal B_s, \mathcal B_m)
\longrightarrow
(\langle \textbf{wait}~e, \sigma, t, b\rangle_i \cup \mathcal P,\mathcal B_r, \mathcal B_s, \mathcal B_m[t'\mapsto v])}

\inferrule*[right=TransferNoWait]
{
t' \in \mathcal B_s\quad 
i=\textsc{Sender}(t',N)\quad
c_i \neq \textbf{wait}~e\textbf{;~}c'\quad
c_i \neq \textbf{barrier}\textbf{;~}c'\quad
\mathcal B_s(t') = x\quad 
\langle x,\sigma_i \rangle \Downarrow v}
{(\mathcal P,\mathcal B_r, \mathcal B_s, \mathcal B_m)
\longrightarrow
(\mathcal P,\mathcal B_r, \mathcal B_s, \mathcal B_m[t'\mapsto v])}

\inferrule*[right=FreeBuffer]
{i = \textsc{Sender}(t',N)\quad
j = \textsc{Receiver}(t',N)\quad
t_i \geq t'\quad 
t_j \geq t'\quad
t' \in \mathcal{B}_m}
{(\mathcal P,\mathcal B_r, \mathcal B_s, \mathcal B_m)
\longrightarrow
(\mathcal P,\mathcal B_r, \mathcal B_s, \mathcal B_m\!\setminus\!\{t'\})}

\inferrule*[right=Barrier]
{~}
{(\bigcup_{0 \leq i < N}\langle\textbf{barrier}\textbf{; }c,\sigma,t,b\rangle_i,\mathcal B_r, \mathcal B_s, \mathcal B_m)
\longrightarrow
(\bigcup_{0\leq i < N}\langle\textbf{skip; }c,\sigma,t,b+1\rangle_i,\mathcal B_r, \mathcal B_s, \mathcal B_m)}
\end{mathpar} 
\caption{Multi-Process Operational Semantics Rules.}
\label{fig:reductions:multi}
\end{subfigure}
\caption{Core calculus reduction rules.}
\label{fig:reductions}
\end{figure}

\begin{figure}[bt]
\tikzset{
  statebox/.style={
    draw,
    rectangle,
    rounded corners=3pt,
    inner sep=2pt,
    text width=1.68cm,
    font=\tiny,
    align=left
  }
}

\begin{tikzpicture}
\matrix[matrix of nodes,
        nodes={statebox},
        column sep=0.55cm,
        row sep=0.2cm,
        nodes in empty cells=false
       ] (M) {

  % Row 1
  \parbox{1.68cm}{\centering \bfseries State 0\\[-5pt]
  \rule{\linewidth}{0.4pt}\\[0pt]
  \raggedright
  $c_0 = \textbf{isend}~t~x\textbf{;}..$\\
  $c_1 = \textbf{irecv}~t~x\textbf{;}..$\\
  $\mathcal B_r = \emptyset$\\
  $\mathcal B_s = \emptyset$\\
  $\mathcal B_m = \emptyset$} &

  \parbox{1.68cm}{\centering \bfseries State 1\\[-5pt]
  \rule{\linewidth}{0.4pt}\\[0pt]
  \raggedright
  $c_0 = \textbf{skip}\textbf{;}~\textbf{wait}~t$\\
  $c_1 = \textbf{irecv}~t~x\textbf{;}..$\\
  $\mathcal B_r = \emptyset$\\
  $\mathcal B_s = \{t\mapsto x\}$\\
  $\mathcal B_m = \emptyset$} &

  \parbox{1.68cm}{\centering \bfseries State 2\\[-5pt]
  \rule{\linewidth}{0.4pt}\\[0pt]
  \raggedright
  $c_0 = \textbf{skip}\textbf{;}~\textbf{wait}~t$\\
  $c_1 = \textbf{irecv}~t~x\textbf{;}..$\\
  $\mathcal B_r = \emptyset$\\
  $\mathcal B_s = \{t\mapsto x\}$\\
  $\mathcal B_m = \{t\mapsto 5\}$} &

  \parbox{1.68cm}{\centering \bfseries State 3\\[-5pt]
  \rule{\linewidth}{0.4pt}\\[0pt]
  \raggedright
  $c_0 = \textbf{skip}$\\
  $c_1 = \textbf{irecv}~t~x\textbf{;}..$\\
  $\mathcal B_r = \emptyset$\\
  $\mathcal B_s = \emptyset$\\
  $\mathcal B_m = \{t\mapsto 5\}$} &

  \parbox{1.68cm}{\centering \bfseries State 4\\[-5pt]
  \rule{\linewidth}{0.4pt}\\[0pt]
  \raggedright
  $c_0 = \textbf{skip}$\\
  $c_1 = \textbf{wait}~t$\\
  $\mathcal B_r = \{t\mapsto x\}$\\
  $\mathcal B_s = \emptyset$\\
  $\mathcal B_m = \{t\mapsto 5\}$} &

  \parbox{1.68cm}{\centering \bfseries State 5\\[-5pt]
  \rule{\linewidth}{0.4pt}\\[0pt]
  \raggedright
  $c_0 = \textbf{skip}$\\
  $c_1 = \textbf{skip}$\\
  $\mathcal B_r = \emptyset$\\
  $\mathcal B_s = \emptyset$\\
  $\mathcal B_m = \{t\mapsto 5\}$}
  \\
& 
  \parbox{1.68cm}{\centering \bfseries State 6\\[-5pt]
  \rule{\linewidth}{0.4pt}\\[0pt]
  \raggedright
  $c_0 = \textbf{isend}~t~x\textbf{;}..$\\
  $c_1 = \textbf{skip;}~\textbf{wait}~t$\\
  $\mathcal B_r = \{t\mapsto x\}$\\
  $\mathcal B_s = \emptyset$\\
  $\mathcal B_m = \emptyset$} &
\parbox{1.68cm}{\centering \bfseries State 7\\[-5pt]
  \rule{\linewidth}{0.4pt}\\[0pt]
  \raggedright
  $c_0 = \textbf{wait}~t$\\
  $c_1 = \textbf{skip;}~\textbf{wait}~t$\\
  $\mathcal B_r = \{t\mapsto x\}$\\
  $\mathcal B_s = \{t\mapsto x\}$\\
  $\mathcal B_m = \emptyset$} &
\parbox{1.68cm}{\centering \bfseries State 8\\[-5pt]
  \rule{\linewidth}{0.4pt}\\[0pt]
  \raggedright
  $c_0 = \textbf{wait}~t$\\
  $c_1 = \textbf{wait}~t$\\
  $\mathcal B_r = \{t\mapsto x\}$\\
  $\mathcal B_s = \{t\mapsto x\}$\\
  $\mathcal B_m = \{t\mapsto 5\}$} &
\parbox{1.68cm}{\centering \bfseries State 9\\[-5pt]
  \rule{\linewidth}{0.4pt}\\[0pt]
  \raggedright
  $c_0 = \textbf{wait}~t$\\
  $c_1 = \textbf{skip}$\\
  $\mathcal B_r = \emptyset$\\
  $\mathcal B_s = \{t\mapsto x\}$\\
  $\mathcal B_m = \{t\mapsto 5\}$} &  \\\\
};

\draw[->] (M-1-1.east) -- node[midway, above]{\tiny S} (M-1-2.west);
\draw[->] (M-1-2.east) -- node[midway, above]{\tiny TNW} (M-1-3.west);
\draw[->] (M-1-3.east) -- node[midway, above]{\tiny SS}node[midway, below]{\tiny WS} (M-1-4.west);
\draw[->] (M-1-4.east) -- node[midway, above]{\tiny R}node[midway, below]{\tiny SS} (M-1-5.west);
\draw[->] (M-1-5.east) -- node[midway, above]{\tiny WR} (M-1-6.west);
\draw[->] (M-2-2.east) -- node[midway, above]{\tiny S} node[midway, below]{\tiny SS} (M-2-3.west);
\draw[->] (M-2-3.east) -- node[midway, above]{\tiny SS} node[midway, below]{\tiny TOW} (M-2-4.west);
\draw[->] (M-2-4.east) -- node[midway, above]{\tiny WR} (M-2-5.west);
\draw[->] 
  (M-1-1.south) 
  -- ($ (M-1-1.center |- M-2-2.center) $) 
  -- node[midway,above]{\tiny R} (M-2-2.west);
\draw[->] 
  (M-2-5.east) 
  -- node[midway,above]{\tiny WS}
  ($ (M-1-6.center |- M-2-5.center) $) 
  -- (M-1-6.south);
\end{tikzpicture}
\caption{Two partial reductions for the program $\textbf{set}~x~5\textbf{; if}~\texttt{rank}=0~\textbf{\{}~\textbf{isend}~t~x~\textbf{\}~\{}~\textbf{irecv}~t~x~\textbf{\};}~\textbf{wait}~t$. The program is executed by two processes; edges are labeled with abbreviations of reduction rules (SS stands for \textsc{SeqSkip}, etc.). The reduction rule above the line is taken first.}
\label{fig:buffers}
\end{figure}
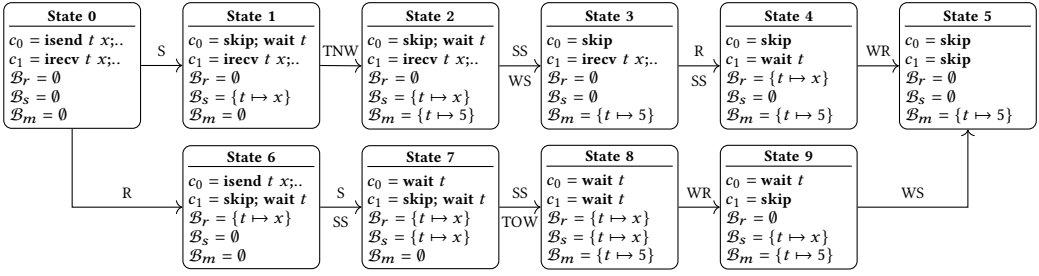

The global rules, shown in Figure~\ref{fig:reductions:multi}, define the single stepping relation $\longrightarrow$ on program states $(\mathcal P,\mathcal B_r, \mathcal B_s, \mathcal B_m)$.
Importantly, these rules omit several premises that would typically be checked dynamically but instead are guaranteed to hold by the Dafny verifier.
For example, the \textsc{Send} rule does not verify that the sender's rank matches the tag's designated sender, because this is checked as part of using DafnyMPI.
We formalize these preconditions as \emph{axioms} in Section~\ref{sec:proof:axioms}, below.

The first two reduction rules are administrative: \textsc{Proc} lifts a single-process reduction to the reduction of the entire program state, and \textsc{SeqStep} takes a step in the first command of a sequence.
Notice that both rules use set notation to pick a process nondeterministically.

Most of the remaining rules model send operations, and the main challenge lies in their nondeterministic behavior.
Depending on the implementation and message size, the MPI runtime may copy the payload into an internal buffer, allowing the sender's subsequent \textbf{wait} to complete immediately.
Alternatively, the process may block until a matching receive is called.
Our operational semantics aims to mimic both of these behaviors.

We use the example in Figure~\ref{fig:buffers} to illustrate the communication rules as we describe them.
The figure shows two of the possible partial reduction sequences of a simple program.
In the initial state (State 0 in the figure), the process with rank~0 is poised to initiate a non-blocking send for a message with tag $t$, and the process with rank~1 is poised to initiate a non-blocking receive for the same message.
Either command may fire first, leading to State~1 or State~6, respectively.

Rule \textsc{Send} adds a request to send $x$ as a message under the appropriate tag $t'$.
The request is registered by placing $x$ in the send buffer $\mathcal{B}_s$.
Similarly, \textsc{Recv} adds a request to receive $x$ with tag $t'$.
As described above, these operations are non-blocking.
In our example, we label the state transitions with \textsc{S} (send) and \textsc{R} (receive), and we see the appropriate updates to $\mathcal B_s$ and $\mathcal B_r$, respectively.

Continuing along the top of the figure, the transition from State~1 to State~2 illustrates \textsc{TransferNoWait}, which, given a sender process that is not waiting or at a barrier (recall $c_i$ indicates the command of the $i$th process), copies the contents of the to-be-sent message from $x$ to $\mathcal{B}_m$ with the appropriate tag $t'$.
In the example, the message payload is 5, and thus $\mathcal{B}_m$ is updated to map $t$ to 5.
This behavior captures the case when the MPI implementation copies a message payload into a buffer, allowing the sender process to unblock before the matching receive is even initiated.

Once the message is copied into the message payload buffer, the process with rank~0 is now ready to complete the send, as captured by the transition from State~2 to State~3 via \textsc{WaitSend}.
The rule checks that the message with tag $t'$ has been placed in $\mathcal{B}_m$ and then removes the tag from the set of pending sends.
In our example, we see that this rule can fire because of the previous execution of \textsc{TransferNoWait}, and therefore tag $t$ is removed from $\mathcal{B}_s$.

The transition from State~3 to State~4 uses \textsc{Recv} to add a request to receive the message with tag $t$, following which the \textsc{WaitRecv} can fire to move from State~4 to State~5.
Analogously to \textsc{WaitSend}, \textsc{WaitRecv} can execute when a message with the given tag is in the message buffer, and it updates the variable to be written with the message payload, removing the tag from $\mathcal{B}_r$ in the process.
Although it is not shown in the figure, $x$ has been updated in $c_1$ to contain 5.

In contrast, the bottom row illustrates the case when an MPI send blocks until the matching receive is executed.
Specifically, after the \textsc{Send} rule is used to transition from State~6 to State~7, the process with rank~0 cannot proceed any further.
The \textsc{WaitSend} rule cannot fire because the message payload buffer $\mathcal{B}_m$ is empty and also the \textsc{TransferNoWait} rule cannot fire because the process is already blocked on a wait.
The only rule that can unblock process with rank~0 is, therefore, \textsf{TransferOnWait}, which can fire whenever the receiving process with rank~1 reaches the corresponding wait call.
This is illustrated with a transition from State~7 to State~8, where the process with rank~1 first gets to the \textbf{wait} call via the \textsc{SeqStep} rule and then initiates the transfer via the \textsf{TransferOnWait} rule.
Subsequently, \textsc{WaitRecv} transitions the system from State~8 to State~9, and \textsc{WaitSend} completes the communication by advancing to State~5.

Note that in State~5, the final state in Figure~\ref{fig:buffers}, the message buffer $\mathcal{B}_m$ still contains the message tagged $t$.
Once the communication associated with the tag has been completed, it may be removed from $\mathcal{B}_m$ at any point according to the \textsc{FreeBuffer} rule.
The rule mimics the expected behavior of MPI runtime in that any memory used internally by the MPI should eventually be freed.
The exact timing is implementation-dependent and irrelevant to our proof.

The last rule of our operational semantics, \textsc{Barrier}, specifies that all processes must execute a barrier simultaneously.
Note that we adopt a slight abuse of notation by allowing the \textsc{Barrier} rule to apply to the final barrier, even though the rule formally presupposes the existence of a subsequent command. 

\subsection{Axioms}
\label{sec:proof:axioms}

Our core calculus represents programs that a DafnyMPI user could write.
For Dafny to verify these programs, the user must prove a number of properties.
These include properties Dafny enforces by default, such as loop termination, absence of division by zero, and absence of references to undeclared variables.
Several other properties are MPI-specific and are expressed as preconditions on the relevant DafnyMPI methods.
We refer to these properties as \emph{axioms}, since in developing our proof, we can assume they hold for any execution as they are discharged by the Dafny verifier.
Each axiom other than \textsc{AtStart} is implicitly parameterized by a particular program state: for any state reachable from the initial state, if the axiom's premises hold, then so do its conclusions.

It is important to note that Dafny only models the behavior of a single (arbitrary) process and can only inspect the part of the program state that the process has access to. 
As a result, none of the axioms mentions the message buffer $\mathcal B_m$, whose contents is internal to the MPI runtime.
Similarly, the axioms only check whether a given tag is present in the send or receive buffers when the process to which the axiom applies is the intended sender or receiver.
Additionally, because the Dafny verifier assumes all method calls terminate, it cannot itself account for the possibility of deadlock.
DafnyMPI builds on top of this foundation to establish deadlock freedom explicitly.
At the same time, Dafny's non-blocking assumption guarantees that the axioms hold in any state reachable under that assumption.
This allows us to reason about a program state by considering what must hold in future states that are guaranteed to be reached.
In particular, because Dafny proves that the final barrier count for any process is equal to $\textsc{BarrierCount}(N)$, we can show that the barrier count never exceeds this value, which is one of the lemmas introduced in Section~\ref{sec:proof:proof}.

The axioms should also be understood in conjunction with the reduction rules presented above. 
For example, we say that the \textsc{AtWait} axiom in Table~\ref{fig:axioms} guarantees that tags increase monotonically but this is only true when we consider the axiom in conjunction with the \textsc{WaitRecv} and \textsc{WaitSend} reduction rules in Figure~\ref{fig:reductions}.
Specifically, the two reduction rules always set the new tag $t_{i,k+1}$ to match the argument of the \textbf{wait} call, which, in turn, is guaranteed to be greater than the previous tag $t_{i,k}$ by the \textsc{AtWait} axiom.

\begin{table}[tb]
\caption{Axioms of the core calculus (implicitly parameterized by program state, except for \textsc{AtStart}).
}
\label{fig:axioms}
\small
\begin{tabular}{|@{\hspace{0.4em}}c@{\hspace{0.4em}}|@{\hspace{0.4em}}c@{\hspace{0.4em}}|@{\hspace{0.4em}}c@{\hspace{0.4em}}|} \hline
Axiom & Premises & Conclusions \\ \hline

\textsc{AtStart} & & $S_0 = (\bigcup_{0\leq i < N}\langle c;~\textbf{barrier}, \{\texttt{rank}\mapsto i, \texttt{size}\mapsto N\}, -1, 0\rangle, \emptyset, \emptyset,\emptyset)$ \\ \hline

\multirow{2}{*}{\textsc{AtSend}}
& $c_{i,k} = \textbf{isend}~e~x\textbf{; }c$
& \multirow{2}{*}{$i = \textsc{Sender}(v,N) \quad \langle x, \sigma_{i,k} \rangle \Downarrow \textsc{Message}(v) \quad v \notin \mathcal{B}_{s,k}$} \\
& $\langle e, \sigma_{i,k} \rangle \Downarrow v$ & \\ \hline

\multirow{2}{*}{\textsc{AtRecv}} 
& $c_{i,k} = \textbf{irecv}~e~x\textbf{; }c$ 
& \multirow{2}{*}{$i = \textsc{Receiver}(v,N)\quad v \notin \mathcal{B}_{r,k}$} \\
& $\langle e, \sigma_{i,k} \rangle \Downarrow v$ 
& \\ \hline

\multirow{4}{*}{\textsc{AtWait}} 
& & $v > t_{i,k}$ \\
& $c_{i,k} = \textbf{wait}~e\textbf{; }c$
& $\forall v'.~ t_{i,k} < v' < v \Rightarrow (\textsc{Sender}(v',N) \neq i \land \textsc{Receiver}(v',N) \neq i)$ \\
& $\langle e, \sigma_{i,k} \rangle \Downarrow v$ 
& $\textsc{Barrier}(b_{i,k},N) \leq v < \textsc{Barrier}(b_{i,k} + 1,N)$ \\
& & $(i = \textsc{Sender}(v,N) \land v \in \mathcal{B}_{s,k}) \lor (i = \textsc{Receiver}(v,N) \land v \in \mathcal{B}_{r,k})$ \\ \hline

\multirow{4}{*}{\textsc{AtBarrier}} 
& & $v > t_{i,k}$ \\
& $c_{i,k} = \textbf{barrier}\textbf{; }c$
& $\forall v'.~ t_{i,k} < v' < v \Rightarrow (\textsc{Sender}(v',N) \neq i \land \textsc{Receiver}(v',N) \neq i)$ \\
& $v = \textsc{Barrier}(b_{i,k}+1,N)$ 
& $\forall v.~ \textsc{Sender}(v,N) = i \Rightarrow v \notin \mathcal{B}_{s,k}$ \\
& & $\forall v.~ \textsc{Receiver}(v,N) = i \Rightarrow v \notin \mathcal{B}_{r,k}$ \\ \hline

\textsc{AtEnd} 
& $c_{i,k} = \textbf{skip}$ 
& $b_{i,k} = \textsc{BarrierCount}(N)$ \\ \hline

\multirow{3}{*}{\textsc{AtSet}} 
& & $\forall v.~ \textsc{Sender}(v,N) = i \Rightarrow (v \notin \mathcal{B}_{s,k} \lor x \neq \mathcal{B}_{s,k}(v))$ \\
& $c_{i,k} = \textbf{set}~x~e\textbf{; }c$
& $\forall v.~ \textsc{Receiver}(v,N) = i \Rightarrow (v \notin \mathcal{B}_{r,k} \lor x \neq \mathcal{B}_{r,k}(v))$ \\
& & $x \neq \texttt{rank}$ \\ \hline

\multirow{2}{*}{\textsc{AtRead}} 
& $\textsc{Receiver}(v,N) = i$ 
& \multirow{2}{*}{$x \notin \text{direct subexpressions of } c_{i,k}$} \\
& $\mathcal{B}_{r,k}(v) = x$ 
& \\ \hline
\end{tabular}

\end{table}

Table~\ref{fig:axioms} presents the axioms of our language. 
The \textsc{AtStart} axiom describes the initial state of the program as outlined above in Section~\ref{sec:proof:core}.
\textsc{AtSend} and \textsc{AtRead} state that a message with a given tag can only be sent or received by its designated \textsc{Sender} and \textsc{Receiver}.
Both axioms also require that the corresponding buffers are empty to disallow repeatedly sending or receiving the same message before calling \textbf{wait}. 
Additionally, \textsc{AtSend} guarantees that the message payload is as specified by the \textsc{Message} function.

The \textsc{AtWait} axiom enforces the tag ordering property. 
First, it guarantees that message tags used with \textbf{wait} increase monotonically for each process.
Second, it enforces that no \textbf{wait} call is skipped---if the process is the designated receiver or sender of a message, it must receive or send this message before moving on to messages with higher tags.
Third, the axiom ensures that no barriers are skipped between successive \textbf{wait} calls.
Finally, it ensures that \textbf{wait} is only called after the corresponding \textbf{isend} or \textbf{irecv}.

\textsc{AtBarrier} requires that processes call barriers in order and wait on all send and receive operations that should precede them.
Similar to the \textsc{Barrier} rule above, the \textsc{AtBarrier} axiom should be understood to apply to the final barrier as well.
The axiom also guarantees that the send and receive buffers are empty at the barrier, i.e., there may be no ongoing send or receive calls. 
Because the last command executed by every process is a barrier, this last requirement guarantees that every \textbf{irecv} and \textbf{isend} is eventually followed by a matching \textbf{wait}. 
At the end of each process's execution, the \textsc{AtEnd} axiom states that the process passed exactly \textsc{BarrierCount}$(N)$ barriers.

Finally, \textsc{AtSet} and \textsc{AtRead} prevent nondeterministic behavior by disallowing reads or writes to variables that are used in interprocess communication.
We explain in Section~\ref{sec:func} how these two axioms are enforced by Dafny in practice when discussing the safe use of read and write buffers.
\textsc{AtSet} also prevents changes to variable \texttt{rank}, which means we can rely on it always being equal to the process's actual rank in the operational semantics.

\subsection{Deadlock}

\begin{table}
\caption{Definition of deadlock and related terms.}
\label{tab:deadlock}
\small
\begin{tabular}{|l|l|} \hline
Term & Conditions \\ \hline
$\textsc{OnSend}((\langle \textbf{wait}~e\textbf{; }c',\sigma,t,b\rangle_i \cup \mathcal P,\mathcal B_r, \mathcal B_s, \mathcal B_m))$ & $\langle e,\sigma \rangle\Downarrow t'\quad i=\textsc{Sender}(t',N)\quad t' \notin \mathcal B_m$ \\\hline
$\textsc{OnRecv}((\langle \textbf{wait}~e\textbf{; }c',\sigma,t,b\rangle_i \cup \mathcal P,\mathcal B_r, \mathcal B_s, \mathcal B_m))$ & $\langle e,\sigma \rangle\Downarrow t'\quad i=\textsc{Receiver}(t',N)\quad t' \notin \mathcal B_m\quad t'\notin \mathcal B_s$ \\\hline
$\textsc{OnBarrier}(\langle \textbf{barrier}\textbf{; }c',\sigma,t,b\rangle)$ &  \\\hline
$\textsc{Terminated}(\langle \textbf{skip},\sigma,t,b\rangle)$ &  \\\hline
& $|\mathcal P_3| \neq N \quad |\mathcal P_4|\neq N$ \\
$\textsc{Deadlock}((\mathcal P_1 \cup \mathcal P_2 \cup \mathcal P_3 \cup \mathcal P_4,$ & $\textsc{OnRecv}(p,\mathcal B_r, \mathcal B_s, \mathcal B_m),~~\forall p \in \mathcal P_1$\\
$\color{white}\textsc{Deadlock}((\color{black}\mathcal B_r, \mathcal B_s, \mathcal B_m))$ & $\textsc{OnSend}(p,\mathcal B_r, \mathcal B_s, \mathcal B_m),~~\forall p \in \mathcal P_2$\\
& $\textsc{OnBarrier}(p),~~\forall p \in \mathcal P_3$ \\
& $\textsc{Terminated}(p),~~\forall p \in \mathcal P_4$ \\ \hline
\end{tabular}
\end{table}

Table~\ref{tab:deadlock} introduces the terminology used to formally define and reason about deadlock.
We begin by listing the conditions under which we consider a process to be blocked:
\begin{itemize}
	\item A process is blocked on a send if it is about to execute a \textbf{wait} for a message it sends, but the message has not yet been placed in the payload buffer. The sender remains blocked until the receiver executes the corresponding \textbf{wait}. (This condition is captured by the \textsc{OnSend} predicate in the table.)
	\item A process is blocked on a receive if it is about to execute a \textbf{wait} command for a message it receives, but neither the message payload buffer nor the send buffer contains the relevant tag. 
	This means the sender has not yet initiated the communication.
	(This condition is defined by the \textsc{OnRecv} predicate in the table.)
	\item A process may be blocked on a barrier if it is about to execute the \textbf{barrier} command, as defined by the \textsc{OnBarrier} predicate in the table.
	As long as not all other processes have reached the barrier, the process will remain blocked.
\end{itemize}

With these definitions in place, we can now formally define deadlock.
A deadlock is any state, other than the one in which all processes have terminated, where no reduction rule applies.
Note that we exclude the \textsc{FreeBuffer} rule from this condition because any state where only \textsc{FreeBuffer} applies eventually leads to a deadlock.
One can see that this definition of deadlock is equivalent to every process being blocked or terminated, except when all processes have terminated.
This is captured by the definition of \textsc{Deadlock} in Table~\ref{tab:deadlock}.
All processes in a deadlocked state must belong to one of four groups: those blocked while waiting for a receive or send operation to complete ($\mathcal P_1$ and $\mathcal P_2$ in the table), those blocked on a \textbf{barrier} ($\mathcal P_3$), and those that have terminated ($\mathcal P_4$). 
Additionally, it is not possible for all processes to belong to $\mathcal P_4$ (because then the program would have terminated) or $\mathcal P_3$ (because then the processes would pass the barrier).
Note that the \textsc{OnRecv} and \textsc{OnSend} predicates specifically rule out states where \textsc{WaitSend}, \textsc{WaitRecv}, \textsc{TransferOnWait}, or \textsc{TransferNoWait} could be applied.
With deadlock formally defined, we now proceed to prove deadlock freedom. 

\subsection{The Deadlock Freedom Proof}
\label{sec:proof:proof}

In this section, we provide an outline of the proof that our core calculus does not admit deadlocks.
The proof proceeds by selecting the process waiting on the message with the smallest tag and showing that its communication partner cannot be blocked, since the ordering on message tags would only permit blocking on a smaller tag.
The main theorem relies on a number of helper lemmas, which we list without proof in the interest of conciseness.
Both the lemmas and the theorem assume the axioms described in Section~\ref{sec:proof:axioms} hold, including the description of the initial program state.
The complete proof is included in the supplementary material.

\begin{lemma}[\textsc{NoTwoTagsTheSame}]

\textit{If two processes are about to execute \textup{\textbf{wait}} commands in a deadlocked state, the corresponding message tags must be different.}
\end{lemma}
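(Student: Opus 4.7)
The plan is a proof by contradiction. Suppose that in some deadlocked state $S$ two distinct processes $i \neq j$ are both poised to execute $\textbf{wait}~e_i$ and $\textbf{wait}~e_j$, where the two expressions evaluate to the same tag $v$ in their respective local environments. The goal is then to exhibit a reduction rule that applies to $S$, contradicting the definition of deadlock.

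The first step is to invoke the \textsc{AtWait} axiom at each of the two processes. Its final disjunctive conclusion forces each process to be either $\textsc{Sender}(v,N)$ with $v \in \mathcal{B}_s$, or $\textsc{Receiver}(v,N)$ with $v \in \mathcal{B}_r$. Because \textsc{Sender} and \textsc{Receiver} deterministically pick a single rank per tag, and because $i \neq j$, the two processes cannot both take the sender role or both take the receiver role. Without loss of generality, $i = \textsc{Sender}(v,N)$ giving $v \in \mathcal{B}_s$, and $j = \textsc{Receiver}(v,N)$ giving $v \in \mathcal{B}_r$.

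The second step is to consult the deadlock definition in Table~\ref{tab:deadlock}. Since $j$'s next command is $\textbf{wait}$ rather than $\textbf{barrier}$ or $\textbf{skip}$, it cannot be classified under $\mathcal{P}_3$ or $\mathcal{P}_4$, so it must satisfy either \textsc{OnRecv} or \textsc{OnSend}. The \textsc{OnSend} predicate requires $j = \textsc{Sender}(v,N)$, which is impossible by the uniqueness of senders combined with $j \neq i$. Hence $j$ must satisfy \textsc{OnRecv}, whose conditions include $v \notin \mathcal{B}_s$, directly contradicting what \textsc{AtWait} applied to $i$ established.

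The main obstacle is not in the chain of deductions itself, which is short once the axioms and predicates are laid side by side, but in recognizing \emph{why} the deadlock predicate is phrased the way it is. Specifically, the clause $v \notin \mathcal{B}_s$ inside \textsc{OnRecv} is precisely what excludes configurations in which \textsc{TransferOnWait} could fire to copy the payload from the sender's source variable into $\mathcal{B}_m$ and subsequently allow both sides to progress via \textsc{WaitSend} and \textsc{WaitRecv}. Keeping this correspondence between each blocking predicate and the reduction rule it is designed to rule out in mind is what makes the contradiction go through cleanly, without having to case-split over every reduction rule.
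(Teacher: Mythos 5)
Your proof is correct: \textsc{AtWait}'s final disjunct forces the two distinct waiting processes to be, respectively, the sender and the receiver of the shared tag $v$, so the sender's side yields $v \in \mathcal{B}_s$, while the receiver can only be classified under \textsc{OnRecv} in the deadlock partition, which requires $v \notin \mathcal{B}_s$ --- a contradiction. The paper relegates this lemma's proof to the supplementary material, but your argument is exactly the one its definitions are set up to support (the $v \notin \mathcal{B}_s$ clause of \textsc{OnRecv} corresponding to \textsc{TransferOnWait}), so there is nothing essential to add.
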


\begin{lemma}[\textsc{NonEmptyBuffer}]

\textit{If a process is responsible for sending or receiving a message with tag $v\in \mathbb N$, and that process's max used tag value eventually reaches $v$, then at some point during execution, the message buffer must have contained the payload tagged v.}
\end{lemma}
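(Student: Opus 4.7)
The plan is to trace the evolution of a process's $t$ component across a reduction sequence. The key observation is that, among all reduction rules in Figure~\ref{fig:reductions}, only \textsc{WaitRecv} and \textsc{WaitSend} modify the $t$ component of any process, and both carry $t' \in \mathcal{B}_m$ as a premise. Combined with \textsc{AtStart} (which initializes $t_{i,0} = -1$) and \textsc{AtWait} (which forces each new wait tag to strictly exceed the previous one), this reduces the lemma to analyzing a single transition in a reduction sequence.

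Concretely, I would argue as follows. The max tag starts at $-1 < v$ and, by hypothesis, eventually reaches $v$. Moreover, because process $i$ is either \textsc{Sender}$(v,N)$ or \textsc{Receiver}$(v,N)$, the "no intermediate tags" clause of \textsc{AtWait} prevents the wait sequence from skipping past $v$: any wait with argument $v' > v$ following a state in which $t_i < v$ would place $v$ strictly between $t_i$ and $v'$ while keeping $i$ as the designated sender or receiver of $v$, contradicting \textsc{AtWait}. Hence there is a first reduction step $k$ with $t_{i,k} = v$. The transition from $S_{k-1}$ to $S_k$ must mutate $t_i$ upward, so by the case analysis above it is an instance of \textsc{WaitRecv} or \textsc{WaitSend} applied to process $i$ with $\langle e,\sigma_{i,k-1}\rangle \Downarrow v$. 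The rule's premise $v \in \mathcal{B}_m$ then holds in $S_{k-1}$, which is exactly the conclusion we want.

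The step that would demand the most care is the exhaustive case analysis verifying that $t_i$ is mutated only by the two wait rules. This is mechanical but tedious: \textsc{Proc} lifts the single-process rules in Figure~\ref{fig:reductions:single}, none of which change $t$; \textsc{SeqStep} delegates to an inner reduction; and \textsc{Send}, \textsc{Recv}, \textsc{TransferOnWait}, \textsc{TransferNoWait}, \textsc{FreeBuffer}, and \textsc{Barrier} each leave every $t$ component untouched by inspection. I do not expect a genuine obstacle here; the real content of the lemma is essentially baked into the operational semantics, and the proof's role is just to relate the user-visible $t_i$ invariant to the buffer state $\mathcal{B}_m$ that the process cannot directly observe.
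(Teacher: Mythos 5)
Your argument is correct and matches the intended reasoning: since only \textsc{WaitRecv} and \textsc{WaitSend} (both carrying the premise $t' \in \mathcal{B}_m$) ever change a process's $t$ component, and the no-skip clause of \textsc{AtWait} forces the first \textbf{wait} that brings $t_i$ up to or past $v$ to have argument exactly $v$ (because $i$ is the designated sender or receiver of $v$), the pre-state of that transition witnesses $v \in \mathcal{B}_m$. The paper defers this lemma's proof to supplementary material, but your trace argument---starting from $t_{i,0}=-1$ by \textsc{AtStart} and isolating the first tag-increasing step---is essentially the same proof the formal development relies on, with no genuine gap.
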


\begin{lemma}[\textsc{BufferStaysNonEmpty}]

\textit{If the process is the intended receiver or sender of a message in the payload buffer but the message's tag is higher than that of any message the process has received or sent so far, then the message must remain in the buffer.}
\end{lemma}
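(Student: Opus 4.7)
The plan is to prove this as a persistence invariant on reductions: starting from any reachable state where the premises hold, $v \in \mathcal{B}_m$ continues to hold in all successor states for as long as the process's max-used tag $t_i$ stays strictly below $v$. The argument is short because only a single reduction rule can shrink $\mathcal{B}_m$, and its preconditions directly contradict the hypothesis.

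First I would inspect every rule in Figure~\ref{fig:reductions} and observe that $\mathcal{B}_m$ is extended by \textsc{TransferOnWait} and \textsc{TransferNoWait}, is left untouched by every other rule, and is shrunk only by \textsc{FreeBuffer}. So the lemma reduces to showing that \textsc{FreeBuffer} can never fire with argument $v$ under the stated hypothesis. Unfolding its premises, \textsc{FreeBuffer} on tag $v$ requires both $t_{\textsc{Sender}(v,N)} \geq v$ and $t_{\textsc{Receiver}(v,N)} \geq v$. By hypothesis, process $i$ is the sender or the receiver of $v$, and its current max-used tag satisfies $t_i < v$; therefore at least one of the required inequalities fails, and \textsc{FreeBuffer} is blocked in the current state.

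To lift this single-step impossibility to a genuine ``remains in the buffer'' statement, I would proceed by induction on reduction steps, showing that the hypothesis $t_i < v$ is preserved until the moment process $i$ itself waits on $v$. The only rules that can increase $t_i$ are \textsc{WaitRecv} and \textsc{WaitSend}, which set $t_i$ to the evaluated argument $v'$ of the \textbf{wait} command. The \textsc{AtWait} axiom then guarantees $v' > t_i$ \emph{and} no tag strictly between $t_i$ and $v'$ has process $i$ as sender or receiver. Since $i$ is sender or receiver of $v$ and currently $t_i < v$, this forces $v' \leq v$. Hence no sequence of reduction steps can drive $t_i$ past $v$ without first taking the wait on $v$; throughout that interval the hypothesis is preserved, \textsc{FreeBuffer} remains inapplicable to $v$, and the payload persists in $\mathcal{B}_m$.

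The main subtlety, and really the only place where care is needed, is the handling of interleaved steps by other processes: reductions in processes other than $i$ could in principle touch $\mathcal{B}_m$. But \textsc{TransferOnWait} and \textsc{TransferNoWait} only \emph{add} entries, and no other global rule modifies $\mathcal{B}_m$, so such steps cannot invalidate the invariant. The inductive step therefore reduces to the \textsc{FreeBuffer} case handled above, combined with the \textsc{AtWait} argument that pins the progression of $t_i$ below $v$ until the matching wait is reached.
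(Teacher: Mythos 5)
Your proposal is correct and matches the intended argument (the paper defers this lemma's proof to the supplementary material, but its role in the main theorem relies on exactly this reasoning): \textsc{FreeBuffer} is the only rule that shrinks $\mathcal{B}_m$, and its premises demand $t_{\textsc{Sender}(v,N)} \geq v$ and $t_{\textsc{Receiver}(v,N)} \geq v$, which the hypothesis $t_i < v$ rules out, while \textsc{AtWait} together with \textsc{WaitRecv}/\textsc{WaitSend} (the only rules changing $t_i$) keeps $t_i$ from skipping past $v$ without waiting on $v$ itself. Your handling of interleaved steps (the transfer rules only add entries) is the right closing observation, so the persistence-invariant induction goes through.
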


\begin{lemma}[\textsc{NumberOfBarriers}]
    \textit{If the process has a barrier at the end of its command sequence, its internal barrier counter is strictly less than \textsc{BarrierCount}(N).}
\end{lemma}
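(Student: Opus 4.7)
The plan is to invoke the \textsc{AtEnd} axiom together with Dafny's per-process termination guarantee and a small bookkeeping argument on how $b_{i,k}$ can evolve. My first step is to observe, by inspection of the rules in Figure~\ref{fig:reductions}, that the only reduction that touches $b$ is \textsc{Barrier}, which increments it by exactly one. Consequently, for any finite reduction starting from the current state, the net change in $b$ equals the number of \textbf{barrier} commands consumed along the way.

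Next, I would argue that the process does in fact reach a state in which its command is \textbf{skip}. Dafny verifies each process in isolation under the non-blocking assumption discussed in Section~\ref{sec:proof:axioms}, and on that basis proves per-process termination. This meta-level guarantee ensures that the axioms, including \textsc{AtEnd}, are not vacuous: there is a reachable future state where $c_{i,k'} = \textbf{skip}$, at which point \textsc{AtEnd} pins the counter to $b_{i,k'} = \textsc{BarrierCount}(N)$. Combining this with the previous observation, the \textbf{barrier} commands consumed between the current and the terminal state must contribute exactly $\textsc{BarrierCount}(N) - b_{i,k}$ in total.

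The hypothesis that the process still has a \textbf{barrier} at the tail of its command sequence (traceable to \textsc{AtStart}, which requires every initial command to end in \textbf{barrier}, together with the fact that reductions consume sequenced commands strictly left-to-right and never drop a later \textbf{barrier} without first firing \textsc{Barrier} on an earlier one) guarantees that at least one \textbf{barrier} remains to be executed before \textbf{skip} is reached. By the counting argument above, this forces $b_{i,k} \leq \textsc{BarrierCount}(N) - 1$, which is the desired strict inequality.

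The main obstacle is the termination step: our operational semantics admits blocking behaviors that Dafny's single-process verification does not model, so we must be careful to invoke the correct interface between the two levels of reasoning. The argument rests on the principle, articulated in Section~\ref{sec:proof:axioms}, that properties Dafny discharges under its non-blocking assumption, in particular \textsc{AtEnd}, nonetheless characterize \emph{every} reachable state of the full concurrent semantics when they are instantiated there. Once that meta-principle is accepted, the remainder of the proof reduces to an elementary counting argument about a single reduction rule.
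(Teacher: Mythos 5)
Your proposal is correct and matches the approach the paper itself sketches in Section~\ref{sec:proof:axioms}: since only the \textsc{Barrier} rule modifies $b$ (by one per firing) and Dafny's per-process, non-blocking verification guarantees a terminal state where \textsc{AtEnd} forces $b=\textsc{BarrierCount}(N)$, a remaining trailing \textbf{barrier} yields $b_{i,k}\leq \textsc{BarrierCount}(N)-1$. The paper relegates the detailed proof to supplementary material, but its stated justification (``reason about a program state by considering what must hold in future states that are guaranteed to be reached'') is exactly the meta-principle you invoke, so this is essentially the same argument.
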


\begin{theorem}[Deadlock Freedom]
There exists no reduction from the initial state $S_0$ to a deadlocked state.

\end{theorem}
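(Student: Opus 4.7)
The plan is to argue by contradiction: suppose some reduction $S_0 \longrightarrow^* S$ reaches a deadlocked state $S = (\mathcal{P}_1 \cup \mathcal{P}_2 \cup \mathcal{P}_3 \cup \mathcal{P}_4, \mathcal{B}_r, \mathcal{B}_s, \mathcal{B}_m)$. A helpful preliminary invariant, provable by a short induction on reductions starting from $S_0$, is that every process shares the same barrier counter $b^\star$ in every reachable state: only \textsc{Barrier} modifies counters, and it increments all $N$ of them in a single step.

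The main case split is on whether $\mathcal{P}_1 \cup \mathcal{P}_2$ is empty. If it is, then by the definition of \textsc{Deadlock} both $\mathcal{P}_3$ and $\mathcal{P}_4$ are nonempty. But \textsc{NumberOfBarriers} applied to any $p \in \mathcal{P}_3$ gives $b^\star < \textsc{BarrierCount}(N)$, whereas \textsc{AtEnd} applied to any $p \in \mathcal{P}_4$ gives $b^\star = \textsc{BarrierCount}(N)$, an immediate contradiction.

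Otherwise, I pick $p \in \mathcal{P}_1 \cup \mathcal{P}_2$ whose pending wait tag $t^\star$ is minimal; by \textsc{NoTwoTagsTheSame} this minimum is uniquely attained. Assume WLOG $p \in \mathcal{P}_1$ (the $\mathcal{P}_2$ case is symmetric with $\mathcal{B}_s$ and $\mathcal{B}_r$ swapped) and let $q = \textsc{Sender}(t^\star, N) \neq p$. From \textsc{OnRecv} I have $t^\star \notin \mathcal{B}_m \cup \mathcal{B}_s$, and from \textsc{AtWait} applied to $p$ I have $t_p < t^\star < \textsc{Barrier}(b^\star+1, N)$. It remains to rule out each possible location of $q$. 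If $q \in \mathcal{P}_1 \cup \mathcal{P}_2$ with wait tag $v_q$, minimality together with \textsc{NoTwoTagsTheSame} gives $v_q > t^\star$, and then the no-skip clause of \textsc{AtWait} on $q$, combined with $q = \textsc{Sender}(t^\star,N)$, forces $t^\star \leq t_q$; \textsc{NonEmptyBuffer} then places $t^\star$ in $\mathcal{B}_m$ at some earlier step, and \textsc{BufferStaysNonEmpty} (applicable since $p = \textsc{Receiver}(t^\star,N)$ and $t_p < t^\star$) keeps it there, contradicting $t^\star \notin \mathcal{B}_m$. If $q \in \mathcal{P}_3$, the no-skip clause of \textsc{AtBarrier} applied to $q$ (using $b_q = b^\star$ and $t^\star < \textsc{Barrier}(b^\star+1, N)$) again forces $t^\star \leq t_q$, and the same buffer argument concludes. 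Finally, if $q \in \mathcal{P}_4$, then \textsc{AtEnd} gives $b^\star = \textsc{BarrierCount}(N)$; but $p$'s command has the form $\textbf{wait}~e\textbf{; }c'$ whose tail still ends in the final $\textbf{barrier}$ inserted by \textsc{AtStart}, so \textsc{NumberOfBarriers} applied to $p$ yields $b^\star < \textsc{BarrierCount}(N)$, contradicting $b_p = b^\star$.

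The main obstacle is the bookkeeping in the minimal-tag case above: one must thread the no-skip clauses of \textsc{AtWait} and \textsc{AtBarrier} together with \textsc{NonEmptyBuffer} and \textsc{BufferStaysNonEmpty} to argue that any sender that has moved past $t^\star$ must have left its payload sitting in $\mathcal{B}_m$, where \textsc{FreeBuffer} cannot have evicted it because the receiver $p$ still satisfies $t_p < t^\star$. This tight coupling between the \textsc{FreeBuffer} guard and the receiver's tag pointer is precisely what makes the tag-ordering discipline enforced by DafnyMPI's axioms sufficient to exclude deadlock.
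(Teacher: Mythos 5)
Your proposal is correct and follows essentially the same argument as the paper: the same case split on whether $\mathcal{P}_1 \cup \mathcal{P}_2$ is empty, the same shared-barrier-counter invariant, and the same minimal-tag argument in which the sender's location is ruled out subcase by subcase using \textsc{AtWait}, \textsc{AtBarrier}, \textsc{AtEnd}, \textsc{NumberOfBarriers}, and then \textsc{NonEmptyBuffer} together with \textsc{BufferStaysNonEmpty} to contradict $t^\star \notin \mathcal{B}_m$. No substantive differences to report.
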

\begin{proof}
Suppose, for the sake of contradiction, that there exists a reduction $S_0\longrightarrow \cdots \longrightarrow S_k$ and $\textsc{Deadlock}(S_k)$. 
By the definition of \textsc{Deadlock}, there must then exist some $\mathcal{P}_1$, $\mathcal{P}_2$, $\mathcal{P}_3$, $\mathcal{P}_4$, $\mathcal B_s$, $\mathcal B_r$, and $\mathcal B_m$ such that $S_k=(\mathcal{P}_1 \cup \mathcal{P}_2 \cup \mathcal{P}_3 \cup \mathcal{P}_4, \mathcal B_s, \mathcal B_r, \mathcal B_m)$ and $\mathcal{P}_1$, $\mathcal{P}_2$, $\mathcal{P}_3$, $\mathcal{P}_4$ have the properties described in Table~\ref{tab:deadlock}. 
There are two cases, either $\mathcal{P}_1 \cup \mathcal{P}_2 = \emptyset$ or $\mathcal{P}_1 \cup \mathcal{P}_2 \neq \emptyset$.

\begin{itemize}
    \item \textbf{Case 1}. Suppose $\mathcal{P}_1 \cup \mathcal{P}_2 = \emptyset$. 
    This means that all processes that are still running are waiting on some barrier.
    By the definition of \textsc{Deadlock}, $|\mathcal P_3|\neq N$. 
    In other words, there must be at least one process that has terminated.
    Let that process be $p_m$ with rank $m$.
    Because this process has terminated, the \textsc{AtEnd} axiom applies, meaning that $b_{m,k}=\textsc{BarrierCount}(N)$.  
    Note that all processes must always share the same $b$, since all processes pass barriers simultaneously and all processes start with $b=0$ according to the \textsc{AtStart} axiom. 
    However, we know that at least one process, let us call it $p_i\in \mathcal P_3$, has not terminated yet.
    By the \textsc{NumberOfBarriers} Lemma, we know that $b_{i,k}<\textsc{BarrierCount}(N)$. 
    However, we also know that $b_{i,k}=b_{m,k}=\textsc{BarrierCount}(N)$.
    Hence, we reached a contradiction.
    Therefore, $\mathcal{P}_1 \cup \mathcal{P}_2 \neq \emptyset$.
    
   \item \textbf{Case 2}. Suppose $\mathcal{P}_1 \cup \mathcal{P}_2 \neq \emptyset$.
    This means that at least one process must be blocked on a \textbf{wait} call for which it is the receiver or sender.
Let $m$ be the rank of the process that is waiting on a message with the smallest tag, namely that $c_{m,k}= \textbf{wait}~e_{m}\textbf{; }c'$ and $\langle e_{m}, \sigma_{m,k} \rangle \Downarrow v_{m}$, where $v_{m}$ is smaller than the respective value for any other process in $\mathcal{P}_1 \cup \mathcal{P}_2$.
Such a minimum exists because, according to \textsc{NoTwoTagsTheSame} lemma, no two processes may wait on the same tag. 
We will now assume that $p_m \in \mathcal{P}_1$ (the reasoning is symmetric when $p_m \in \mathcal{P}_2$.)
Because $p_m\in \mathcal{P}_1$, we know that $m=\textsc{Receiver}(v_m,N)$.

Now consider the process with rank $i=\textsc{Sender}(v_m,N)$.
We will now show that $t_{i,k} \geq v_{m}$. 
We know that $p_{i,k}\in\mathcal{P}_1\cup \mathcal{P}_2 \cup\mathcal{P}_3\cup\mathcal{P}_4$. So there are three subcases:

\begin{itemize}
    \item Subcase 1: $p_{i}\in\mathcal{P}_1\cup\mathcal{P}_2$. 
    Then $c_{i,k}=\textbf{wait}~e_i\textbf{; }~c'$ and $\langle e_{i}, \sigma_{i,k}\rangle\Downarrow v_i$. 
    By construction, we know that $v_{i}>v_{m}$. 
    By the \textsc{AtWait} axiom, we also know that $v_{i} > t_{i,k}$.
    By the same axiom, it must be the case that $t_{i,k} \geq v_{m}$ since there must be no tag between $v_{i}$ and $t_{i,k}$ for which $i$ is the sender. 
    \item Subcase 2: $p_{i}\in\mathcal{P}_3$. By the \textsc{AtBarrier} axiom applied to process $i$ in state $k$, it follows that \textsc{Barrier}($b_{i,k}+1,N) > t_{i,k}$. 
    By the \textsc{AtWait} axiom applied to process $m$ in state $k$, $v_m$ < \textsc{Barrier}($b_{m,k}+1, N$). 
    As discussed above, all processes have the same value of $b$ in any given state.
    Hence, we have $b_{i,k}=b_{m,k}$ and $v_m< \textsc{Barrier}(b_{i,k}+1, N)$.
    By the \textsc{AtBarrier} axiom applied to process $i$ in state $k$, there must be no tag between $t_{i,k}$ and $\textsc{Barrier}(b_{i,k}+1, N)$ for which $i$ is the sender, so, therefore, $t_{i,k}\geq v_m$.
    \item Subcase 3: $p_{i}\in\mathcal{P}_4$ meaning that $p_{i}$ has terminated. 
    Note that by the \textsc{AtStart} axiom, $p_m$ has a \textbf{barrier} command at the end of its command sequence.
    Since $p_m$ has not passed this barrier yet, the \textsc{NumberOfBarriers} lemma applies, and $b_{m,k} < \textsc{BarrierCount}(N)$.
    At the same time, the \textsc{AtEnd} axiom applies to the process with rank $i$, so $b_{i,k} = \textsc{BarrierCount}(N)$.
    However, this leads to a contradiction because, as discussed above, all processes have the same value of $b$ in any given state, so $b_{i,k}=b_{m,k}$.
    Therefore, Subcase 3 is not possible.
    \end{itemize}

We have now established that $t_{i,k} \geq v_{m}$. 
Consider also that by the \textsc{AtWait} axiom, $v_m > t_{m,k}$. Since $t_{i,k} \geq v_{m} > t_{m,k}$, we can invoke the \textsc{NonEmptyBuffer} lemma for process $i$ and the \textsc{BufferStaysNonEmpty} lemma for process $m$. 
This gives us that $v_{m}\in B_{m,k}$. 
However, this is a contradiction because the \textsc{Deadlock} predicate requires that $v_{m}\notin B_{m,k}$. 
Therefore, $\mathcal{P}_1 \cup \mathcal{P}_2= \emptyset$.
\end{itemize}

We have now shown that both cases lead to a contradiction.
Therefore, our initial assumption must be wrong and \textsc{Deadlock} is not reachable.
\end{proof}

The proof above connects the behavior described by the MPI standard---formalized as the operational semantics of the core calculus---with the specifications we write in Dafny---formalized as the axioms of the core calculus---to establish that the specifications prevent deadlock in programs that use DafnyMPI.
The only other instance of interprocess reasoning necessary to establish full program correctness concerns the contents of the messages, which we discuss next.

\section{Proving Functional Equivalence}
\label{sec:func}

Our approach guarantees that a program written using DafnyMPI will never deadlock, but it does not by itself guard against domain-specific errors. % that a programmer may make, e.g., by incorrectly computing the five-point stencil or making some other application-specific mistake.
At the same time, it is not unusual for software engineers to make such mistakes when writing concurrent code.
Therefore, to further ensure the correctness of DafnyMPI programs, we propose a method for establishing functional equivalence between a parallel program written using DafnyMPI and a sequential program written in plain Dafny.
We follow this method to prove functional correctness of all three benchmarks described in Section~\ref{sec:evaluation}.
Specifically, we prove that the zeroth process---by convention the one that aggregates the results of all processes at the end---produces the same result as the sequential version.
The method consists of the following key steps:
\begin{itemize}
	\item First, the programmer writes in Dafny a sequential version of the computation they wish to parallelize.
	Sequential code is typically much easier to understand, reducing the likelihood of domain-specific bugs at this stage.
	\item Next, the programmer writes a ghost functional specification of the program and proves that the sequential version returns the same result as the functional specification for all inputs. 
	The functional specification is typically more difficult to understand since all loops have to be replaced with recursion. 
	This proof is reasonably straightforward, since this is what Dafny is commonly used for (e.g., see~\cite{aws}).
	We discuss some proof engineering considerations, particularly how we deal with proof brittleness, in Section~\ref{sec:engineering}.
	\item Next, the programmer decides on the topology of the parallel version and writes the corresponding ghost specification of the communication as in the first snippet of Dafny code in Figure~\ref{fig:example}.
	This specification may call out to the previously developed functional specification to describe the payloads of all the messages.
	\item Finally, the programmer writes the parallel version of the program while ensuring that (i) the zeroth process returns the same result as the functional specification (therefore being equivalent to the sequential implementation by transitivity), and (ii) the program abides by topology specification (enforced by DafnyMPI). 
	This is typically the most difficult part of the process.
	In our proofs, we follow the top-down approach by sketching the main program loop first and assuming lower-level methods behave as expected, then proving the correctness of these lower-level methods.
\end{itemize} 

To reason about the return value of a process in an MPI program, and, consequently, to establish functional equivalence, we must reason about the values returned by the individual MPI operations.
DafnyMPI allows such reasoning by establishing two properties.
First, DafnyMPI uses rely-guarantee reasoning to prove that the messages received by any process are exactly as specified by the user.
Second, DafnyMPI enforces safe use of buffers, to prevent situations when two receive operations write simultaneously to the same buffer. 
Together, these properties guarantee that the content of the write buffer is exactly as specified by the user at the time when the corresponding read operation terminates.
We now discuss these two properties in more detail.

\paragraph{Rely-Guarantee Reasoning}
For Dafny to verify properties about the values computed by each process, it must know the contents of the messages these processes receive.
It is clear that the contents will always be exactly as specified by the \textsc{Message} function (introduced in Section~\ref{sec:proof:core}).
Specifically, the \textsc{AtSend} axiom guarantees that the variable from which the message originates abides by the \textsc{Message} function and the \textsc{AtSet} and \textsc{AtRead} axioms prevent the process or another MPI operation from writing to this variable until the communication completes.
Because the \textsc{Message} function is a pure function of the message tag and the number of processes, any two processes will always agree on the contents of the message with the same tag.
More formally, this is an example of rely-guarantee reasoning.
Dafny can rely on each message received by a given process to abide by the user-provided specification because it guarantees that every message ever sent also abides by the specification.
In the Dafny code, this is captured by a postcondition on each \textbf{wait} call, which describes the message being received. 

\paragraph{Safe Use of Read and Write Buffers}

Our core calculus prevents writes to variables used in MPI communication (via the \textsc{AtSet} axiom) and prevents any use of variables that an MPI operation may already be writing to (via the \textsc{AtRead} axiom).
These restrictions are necessary because, in the presence of non-blocking operations, a process can execute concurrently with the MPI operations it initiated.
To enforce these guarantees in Dafny, we must additionally account for aliasing and the fact that real MPI implementations perform communication through buffers, i.e., arrays of data shared by reference.
In practice, MPI libraries may read from or write to the designated portion of a buffer at any time until the corresponding operation completes.
Send commands can, and often do, use overlapping buffers, but each receive command must have exclusive access to the relevant portion of the buffer.
Additionally, a process cannot write to a buffer currently in use by any MPI operation or read from any buffer that is being written to by an MPI call.

Enforcing the correct use of buffers is possible in Dafny because Dafny abstracts away direct memory access, meaning the only way to access a buffer is by calling an instance method on the relevant object.
To support buffers, we introduce two custom classes for 1D and 2D arrays of floating-point numbers.
Both classes implement a shared interface (a.k.a. \emph{trait} in Dafny) for modeling memory contiguity and per-element read and write locks as part of the ghost state. 
For each array element, the object's ghost state tracks whether it is currently being written to by a receive operation or is being read by (possibly multiple) send operations. 
Write locks act as flags: when a portion of an array is locked for writing, that region cannot be read or written until the lock is released.
Read locks, in contrast, act as counters: multiple MPI operations may read the same elements concurrently, but writes are disallowed as long as at least one operation is still reading them.
Each send or receive command takes as input a reference to the array used as a buffer, together with the starting index and the size of the message.
The preconditions of the corresponding MPI call ensure the specified array elements are available for reading or writing.
If the MPI call is non-blocking, the postconditions then mark the relevant elements of the array as locked until the operation completes.
For 2D arrays, we model C-style contiguity (which is guaranteed at  runtime by Python's Numpy library as discussed in~\ref{sec:engineering}), meaning that a row or series of rows are contiguous in memory but columns are not, unless the array only has one column.

Together, the rely-guarantee reasoning about message payloads and the read/write-lock mechanism controlling buffer access guarantee that the content of the received messages is exactly as specified by the user, thereby enabling proofs of functional equivalence.

\section{Implementation}
\label{sec:engineering}

DafnyMPI consists of two parts.
The first part is the specifications of MPI primitives, which are instance methods of the \texttt{MPI.World} class.
These specifications, along with a few supplementary lemmas and functions, comprise 379 lines of Dafny code and form the trust base of the library.
This file does not get compiled to Python directly; instead, each specified MPI primitive is implemented as a Python wrapper function (about 60 lines of code total), which we provide alongside the library.

The second part of DafnyMPI defines two classes for manipulating 1D and 2D arrays of real numbers, with the added ability to track  which elements are currently in use by MPI (see above).
For maximum portability, all array operations such as rolling, slicing, scalar and per-element multiplication and addition, etc. are implemented in Dafny and rely only on calls to the constructor and three primitive \code{Get}, \code{Set}, and  \code{Size} methods.
This part of the library totals 1772 lines of Dafny code, which can be directly compiled to Python (or any other target language supported by Dafny) with the exception of the constructor and the three primitive methods mentioned above.
The latter are all implemented in Python using the Numpy library, and the code is provided alongside the DafnyMPI library.
Numpy's default array constructor also guarantees C-style contiguity of array elements in memory, which is a property our Dafny code assumes on constructing an array and maintains throughout. 
In practice, all array operations could be delegated to Numpy at runtime to optimize for performance, but we opted to minimize the trust base instead.

Next, we discuss some of the relevant implementation challenges.

\paragraph{Dealing with Tag Overflow.}

The proof of deadlock freedom in Section~\ref{sec:proof:proof} assumes that MPI tags are unbounded.
In reality, different MPI implementations impose different limits on tag size, e.g., the Intel MPI library uses 20 bits to represent tags~\cite{intelmpi}.
The only requirement imposed by the MPI Standard itself is that the upper bound on the tag must be no less than 32767~\cite{mpi-spec}.

To ensure tags are always valid, DafnyMPI adds a proof obligation to require that the \textsc{Barrier} function we introduce in Section~\ref{sec:proof:core} never increases by more than $32767$.
This ensures that at most $32767$ tags are used between any two barriers.
Therefore, even if tag overflow occurs, tags remain unique between barriers, allowing processes to deterministically identify messages. 

\paragraph{Supporting Blocking Send and Receive.}

Our core calculus models only non-blocking send and receive operations.
As mentioned earlier, we can easily encode the blocking variants as an \textbf{isend} or \textbf{irecv} followed immediately with a \textbf{wait}.
The DafnyMPI library supports blocking send and receive by expressing them in this way for verification purposes while using the corresponding \texttt{MPI\_SEND} and \texttt{MPI\_RECV} commands at runtime.
Support for blocking sends and receives simplifies the code by removing the need to manage \texttt{MPI\_REQUEST} objects when non-blocking communication provides no additional benefit.

\paragraph{Supporting Collective Operations}

The MPI standard describes a number of one-to-many, many-to-one, and many-to-many collective operations such as \texttt{MPI\_SCATTER}, \texttt{MPI\_GATHER}, and \\\texttt{MPI\_ALLREDUCE}. 
The core calculus does not include these features, but DafnyMPI does support the latter two operations by treating them as decorated barriers.
Specifically, DafnyMPI imposes a restriction on collective operations requiring that they are used only when there are no pending send or receive operations running in the background.
At runtime, DafnyMPI explicitly puts a barrier before any such collective operation.
Thus, the addition of these new collective operations cannot introduce deadlocks.
Any information necessary to reason about the output of a given operation is provided by the user when the MPI is initialized, similar to how the one-to-one messages are specified. 

For example, to describe the \code{Gather} operation on line~\ref{line:gather} of our running example, the user must specify that the final \code{nt}th barrier is a \code{Gather} operation and provide the exact functional specification of the information that will be collected (not shown in Figure~\ref{fig:example} in the interest of conciseness). 
At the call to \code{Gather}, in addition to any conditions that are required for a regular barrier, the user must also prove that the data sent by each process matches the respective portion of the specified result and that the relevant buffers can be read from and written to. 
The specification of \code{Gather} then guarantees that the data received by the root process will be exactly as specified above.

\paragraph{Termination}
Of all the correctness properties we list in Section~\ref{ref:overview:dafny}, proving termination requires the least amount of work. 
Once deadlock freedom is established and the verifier successfully rules out runtime errors such as out-of-bounds array access,  loops and recursions remain as the only obstacles to termination.
We establish that loops and recursion terminate as is standard in Dafny, by specifying some positive quantity that decreases on each iteration and is bounded below by zero.
For the outer for loop in our running example in Figure~\ref{fig:example}, this quantity is \code{nt-n}, the number of iterations that remain.

\paragraph{Proof engineering considerations}
\label{sec:engineering:proof}

A common issue that arises in automated verification is \emph{brittleness}: the more information one gives to the verifier, the more likely it is that the verification query times out. 
To address this problem, we follow the approach previously introduced by Dafny  engineers~\cite{opaque1, opaque2} in making most functions---particularly those involving quantifiers---\emph{opaque}, hiding their definitions from the verifier unless explicitly revealed.
We also specify the behavior of some such functions using specialized lemmas instead of using postconditions, which similarly gives the user more control over which facts the verifier has access to at any given point.
We recommend that the clients of DafnyMPI follow the same approach to avoid brittleness in their proofs.

\section{Evaluation}
\label{sec:evaluation}

To evaluate whether DafnyMPI meets its design goals, we implemented and verified three common numerical PDE solvers.
We evaluated the proof engineering effort, in terms of the amount of auxiliary proof-related code, and runtime performance, in terms of performance of the parallel implementation.
Our results suggest that DafnyMPI can offer significant runtime improvements, with speedups of up to five times, while keeping the added proof burden manageable, even as the amount of proof-related code increases compared to sequential implementations.
We conducted all experiments on a MacOS machine with an M3 processor and 48 GiB of memory, and the Dafny solver was configured to use an unlimited number of virtual cores for verification.

\begin{figure}
\begin{subfigure}[t]{0.32\textwidth}
\includegraphics[width=\linewidth]{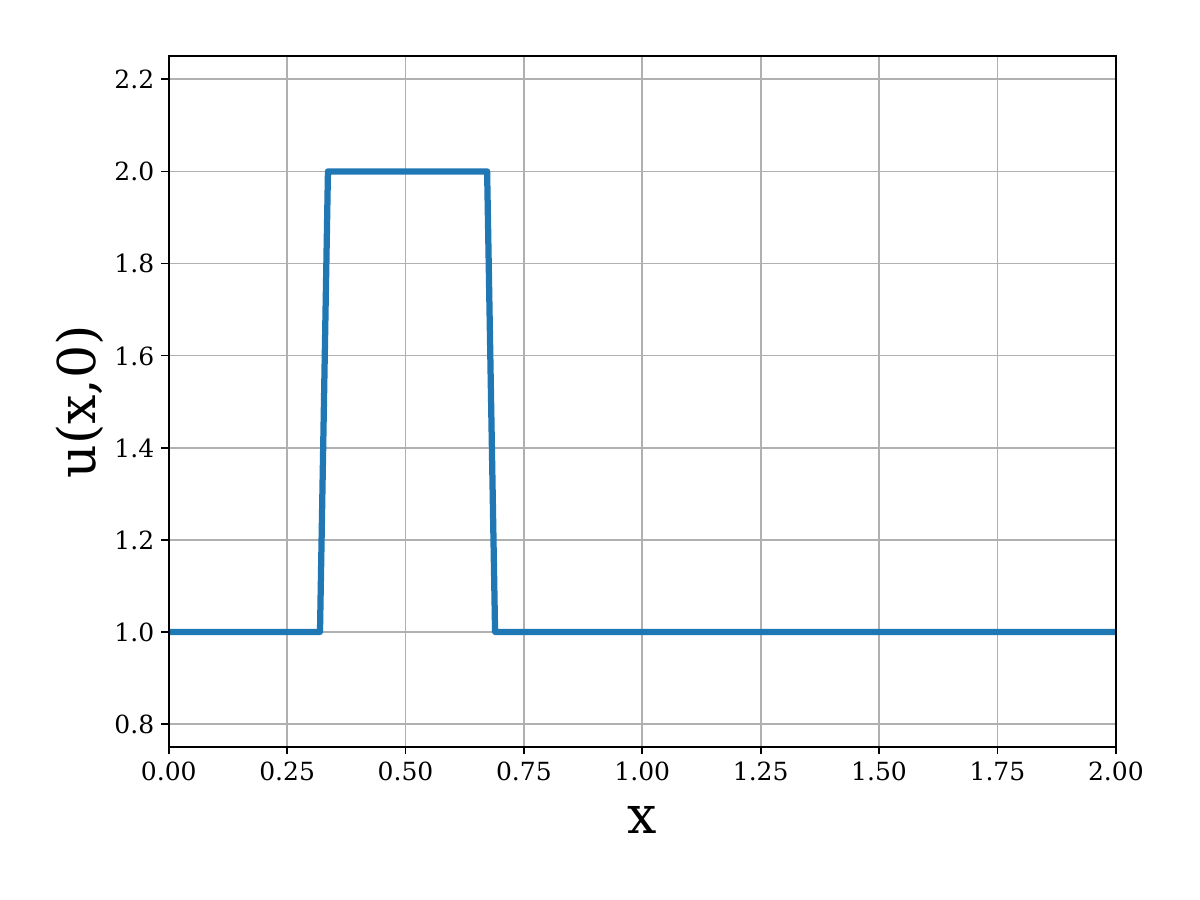}
\end{subfigure}
\begin{subfigure}[t]{0.32\textwidth}
\includegraphics[width=\linewidth]{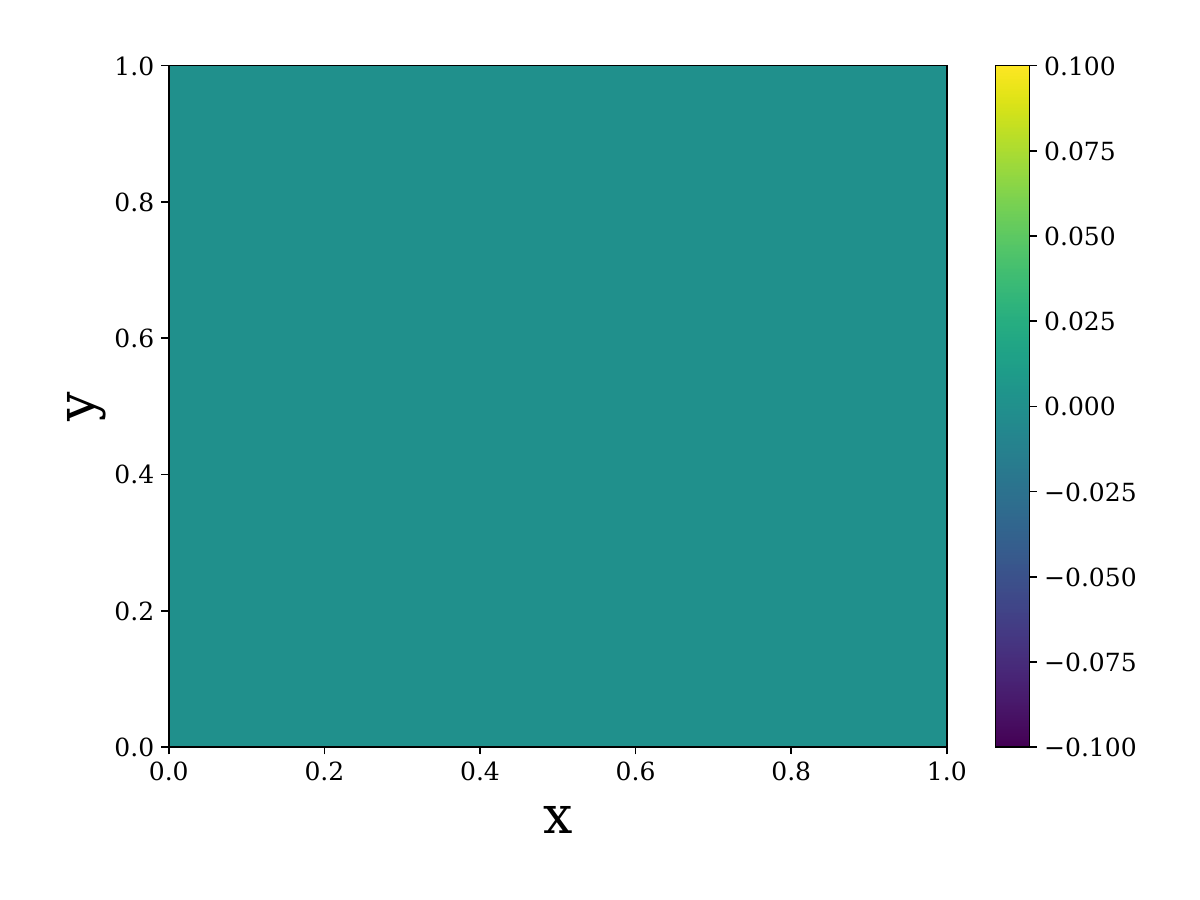}
\end{subfigure}
\begin{subfigure}[t]{0.32\textwidth}
\includegraphics[width=\linewidth]{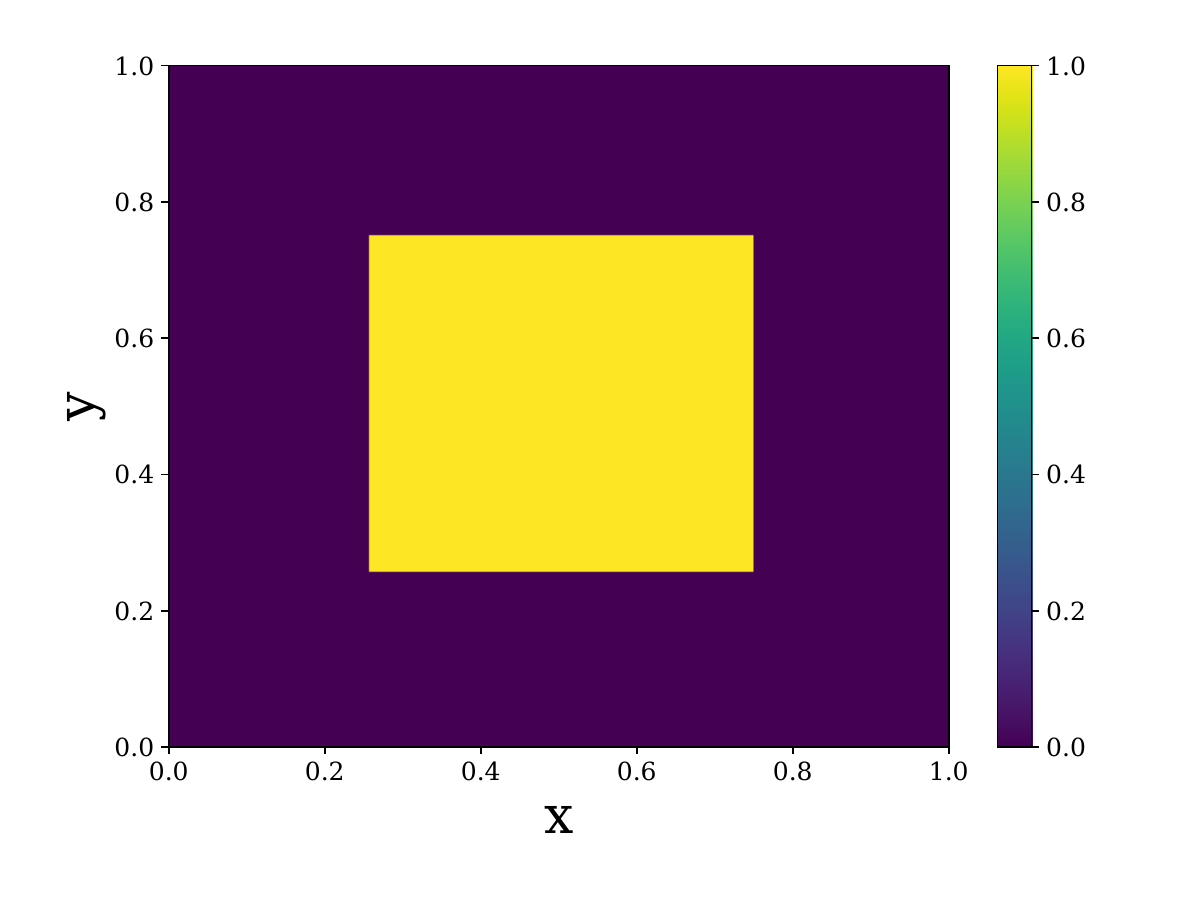}
\end{subfigure}
\begin{subfigure}[t]{0.32\textwidth}
\includegraphics[width=\linewidth]{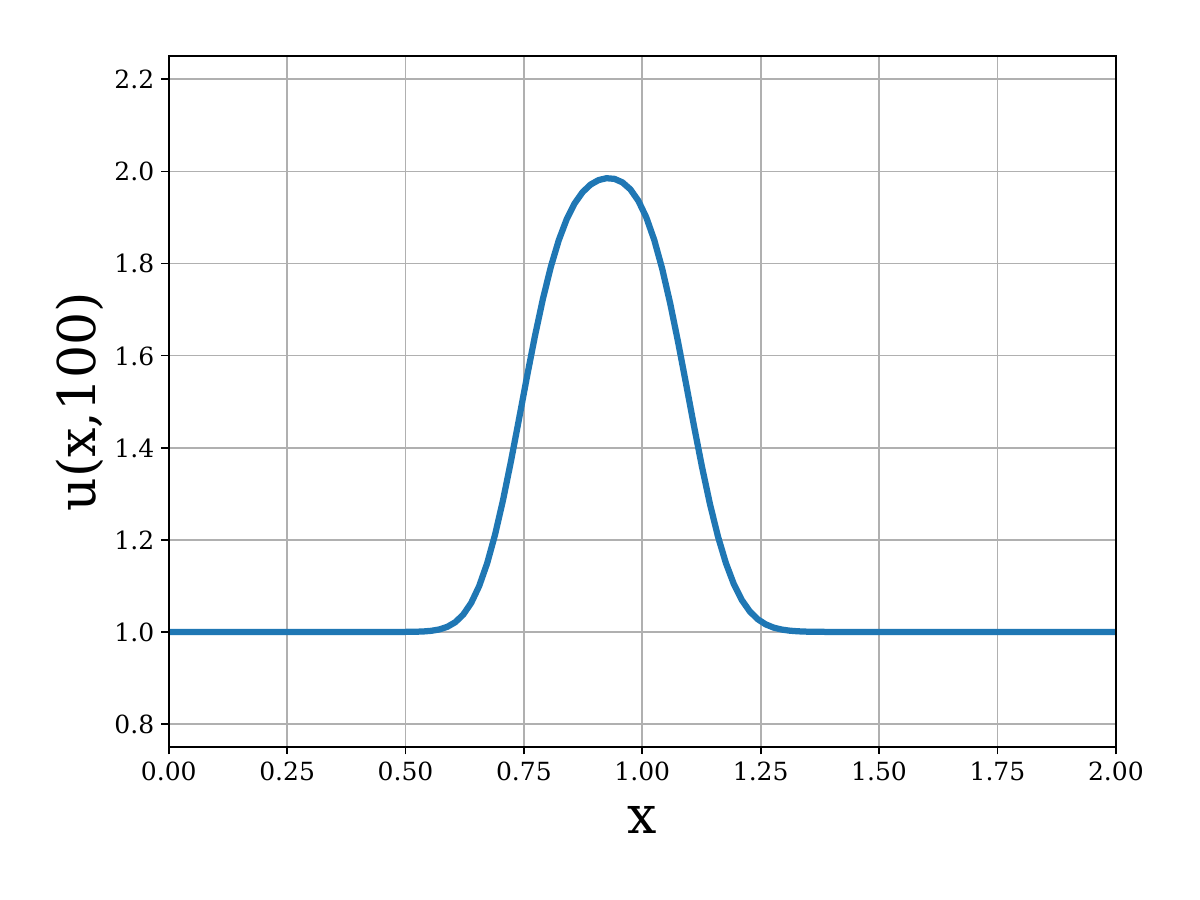}
\end{subfigure}
\begin{subfigure}[t]{0.32\textwidth}
\includegraphics[width=\linewidth]{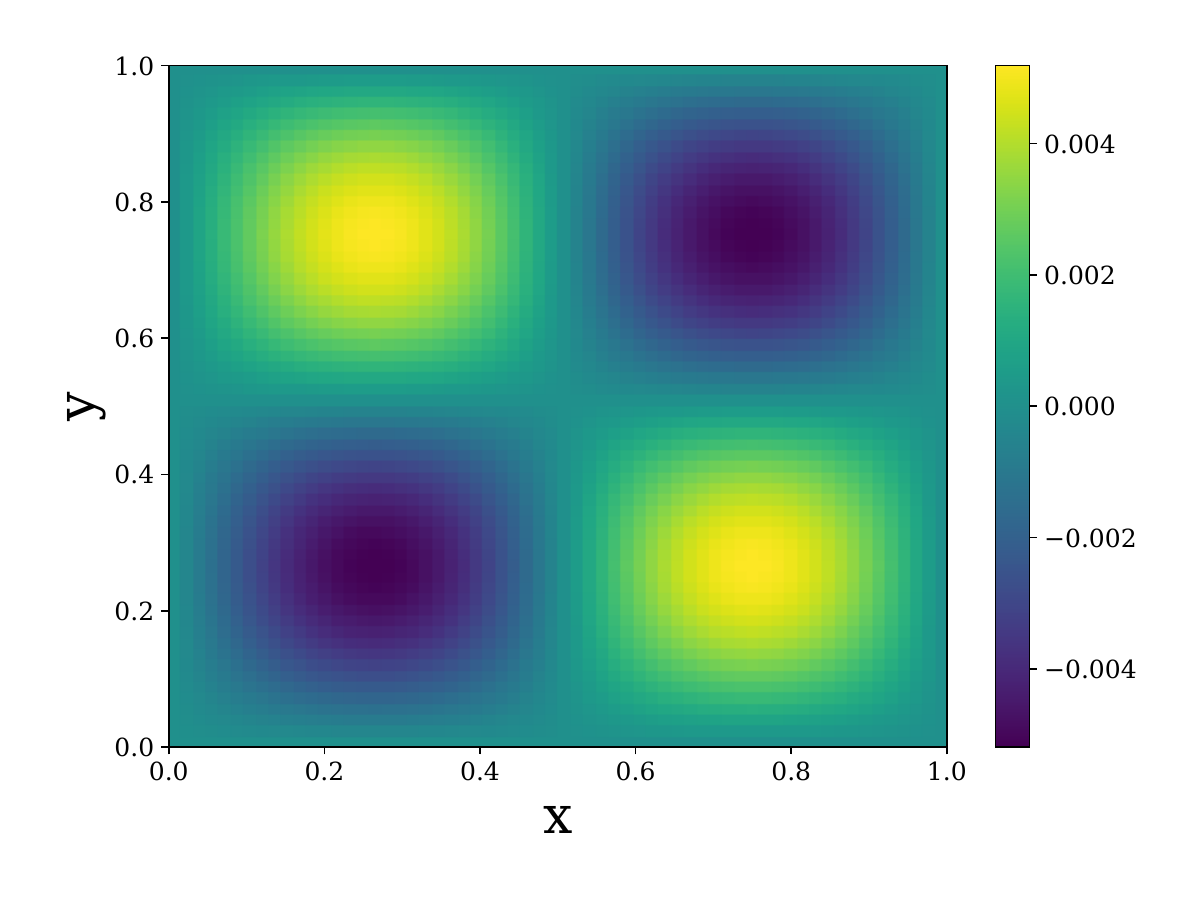}
\end{subfigure}
\begin{subfigure}[t]{0.32\textwidth}
\includegraphics[width=\linewidth]{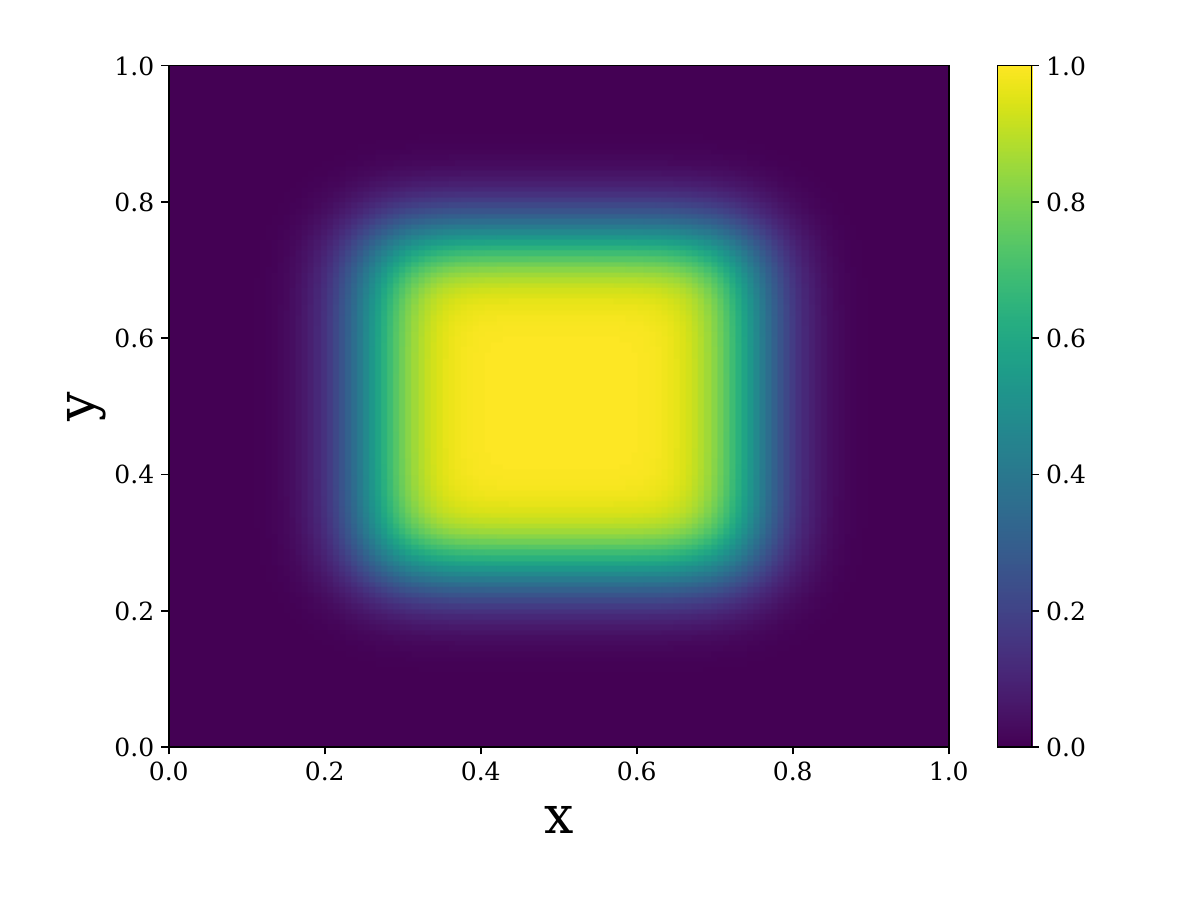}
\end{subfigure}
\caption{Benchmark PDE simulations. Top row: initial conditions; bottom row: solutions after 100 time steps as computed by DafnyMPI. From left to right: Linear Convection, Poisson, Heat Diffusion}
\label{fig:benchmarks}
\end{figure}

\subsection{Benchmarks} 

We used DafnyMPI to implement numerical solvers for three different PDEs, each of which uses a different communication pattern.
Figure~\ref{fig:benchmarks} illustrates the solver results after simulating each PDE for 100 time steps using DafnyMPI.
Figure~\ref{fig:comm} sketches the MPI communication pattern for each solver.
The benchmarks are as follows:
\begin{itemize}
	\item \textbf{Linear Convection.} As discussed in Section~\ref{sec:overview}, we implemented the upwind scheme for solving the 1D linear convection PDE.
	The equation models a wave propagating at constant speed, as shown in the left column of Figure~\ref{fig:benchmarks}.
	Notice that while the analytical solution to linear convection preserves the wave shape exactly, the numerical solution introduces diffusion and cannot fully maintain sharp edges in the initial condition.
	The top left schematic in Figure~\ref{fig:comm} shows that, in the MPI version, each process must send the rightmost value in its domain to its neighbor at each iteration.
	\item \textbf{Poisson.} We implemented Jacobi iteration for solving the 2D Poisson equation with Dirichlet boundary conditions.
	The Jacobi method updates the value at a point based on all four of its neighbors on each iteration.
	For the MPI implementation, we partition the domain into equal horizontal stripes. 
	Each process must send the top and bottom rows of its stripe to its neighbors and receive the two boundary rows in return.
	The top right schematic in Figure~\ref{fig:comm} illustrates this communication pattern (we use columns instead of rows in the figure due to space constraints.)
	The solution to the Poisson equation is a function whose second spatial derivative (Laplacian) is equal to another function, provided as input. 
	Over the course of multiple iterations, the solution can be expected to converge to an equilibrium, so the processes must communicate on each iteration to check whether convergence has been reached. 
	The latter is handled via a collective call to \texttt{MPI\_ALLREDUCE}.
	The middle column of Figure~\ref{fig:benchmarks} shows the solution to the Poisson equation after 100 iterations, where we set the initial conditions to a matrix of zeroes and the right-hand side of the equation to a function $f(x,y)=\sin(2\pi x)\sin(2\pi y)$, whose values on the discretized domain are precomputed using externally implemented trigonometric functions.
	\item \textbf{Heat Diffusion.} We implemented the fourth-order Runge-Kutta method (RK4) and applied it to the 2D heat diffusion equation. 
	The right column of Figure~\ref{fig:benchmarks} illustrates the behavior described by the equation using a heatmap. 
	The heat that is originally at the center of the plot gradually dissipates as the simulation proceeds. 
	The fourth-order Runge-Kutta method computes the value at a point based on a region around it that includes any points within the Manhattan distance of 4.
	For the MPI implementation, we again partition the domain into equal horizontal stripes.
	Unlike in Poisson, however, the processes must communicate 4 rows of data at a time, which introduces an additional layer of complexity to the problem.
	The bottom schematic in Figure~\ref{fig:comm} illustrates this process (as before, we use columns instead of rows.) 
	Note that the data the process sends to its two neighbors can be overlapping, and the sends can happen concurrently. 
	The receives, however, must use separate buffers.
	\end{itemize}

\subsection{Proof Engineering Effort}

To provide insight into the engineering effort involved, Table~\ref{tab:effort} reports the number of lines of code (LOC) for each of the three benchmarks described above.
While it is only an indirect measure, LOC roughly corresponds to the amount of effort that went into verifying each respective part of the code.
We separately list the number of lines in the specification (Spec), sequential implementation (Seq), and parallel MPI implementation (Par) for each benchmark. 
For the \textsc{Heat Diffusion} and \textsc{Poisson} benchmarks, we also list the size of the codebase that is shared between the sequential and the parallel implementations (Shared), which corresponds to several methods for implementing the Jacobi and RK4 techniques, respectively.
As can be expected, due to the increased complexity, the parallel version requires more lines of code to implement.

\paragraph{Ghost vs Executable Code.} Table~\ref{tab:effort} also reports the percentage of lines in each group of files that are \emph{ghost}.
 Ghost code is code that is not executable and whose only purpose is to support the proof.
As can be seen from the figure, the percentage of ghost code increases significantly for the parallel version, which reflects the increased complexity of the proof.
At the same time, the percentage of ghost code in the parallel implementations is comparable to that in some of the more complex subroutines of the sequential version.
For example, the 4th-order Runge-Kutta method (the Shared subcategory of the Heat Diffusion benchmark in Table~\ref{tab:effort}) requires a significant amount of auxiliary proof code to relate the functional specification to the rolling array operations used in the implementation.
As a result, the ghost code makes up 81\% of this shared component, which is close to the 89\% observed in the parallel implementation of the same benchmark.

\begin{table}[tb]
\centering{\setlength{\tabcolsep}{4pt}
\caption{Number of lines of code (LOC) and verification time as measures of proof complexity.}
\begin{tabular}{|l|r|r|r|r|r|r|r|r|r|r|r||r|}
\hline
\multirow{2}{*}{} & \multicolumn{3}{|c|}{Linear Convection} & \multicolumn{4}{|c|}{Poisson} & \multicolumn{4}{|c||}{Heat Diffusion} & \multirow{2}{*}{Total} \\
\cline{2-12}
 & Spec & Seq & Par & Spec & Shared & Seq & Par & Spec & Shared & Seq & Par &  \\ \hline\hline
LOC & 209 & 106 & 379 & 384 & 126 & 153 & 1626 & 328 & 406 & 137 & 1516 & 5370 \\ \hline
\% Ghost & 100 & 45 & 65 & 100 & 72 & 55 & 91 & 100 & 81 & 57 & 89 & 87 \\ \hline
Time to & \multirow{2}{*}{3} & \multirow{2}{*}{3} & \multirow{2}{*}{25} & \multirow{2}{*}{4} & \multirow{2}{*}{6} & \multirow{2}{*}{4} & \multirow{2}{*}{148} & \multirow{2}{*}{3} & \multirow{2}{*}{11} & \multirow{2}{*}{4} & \multirow{2}{*}{154} & \multirow{2}{*}{371} \\
verify (s) & \rule{0pt}{2.2ex} & \rule{0pt}{2.2ex} & \rule{0pt}{2.2ex} & \rule{0pt}{2.2ex} & \rule{0pt}{2.2ex} & \rule{0pt}{2.2ex} & \rule{0pt}{2.2ex} & \rule{0pt}{2.2ex} & \rule{0pt}{2.2ex} & \rule{0pt}{2.2ex} & \rule{0pt}{2.2ex} & \rule{0pt}{2.2ex} \\ \hline
\end{tabular}
\label{tab:effort}
}
\end{table}

\begin{figure}
\begin{subfigure}[t]{0.32\textwidth}
\includegraphics[width=\linewidth]{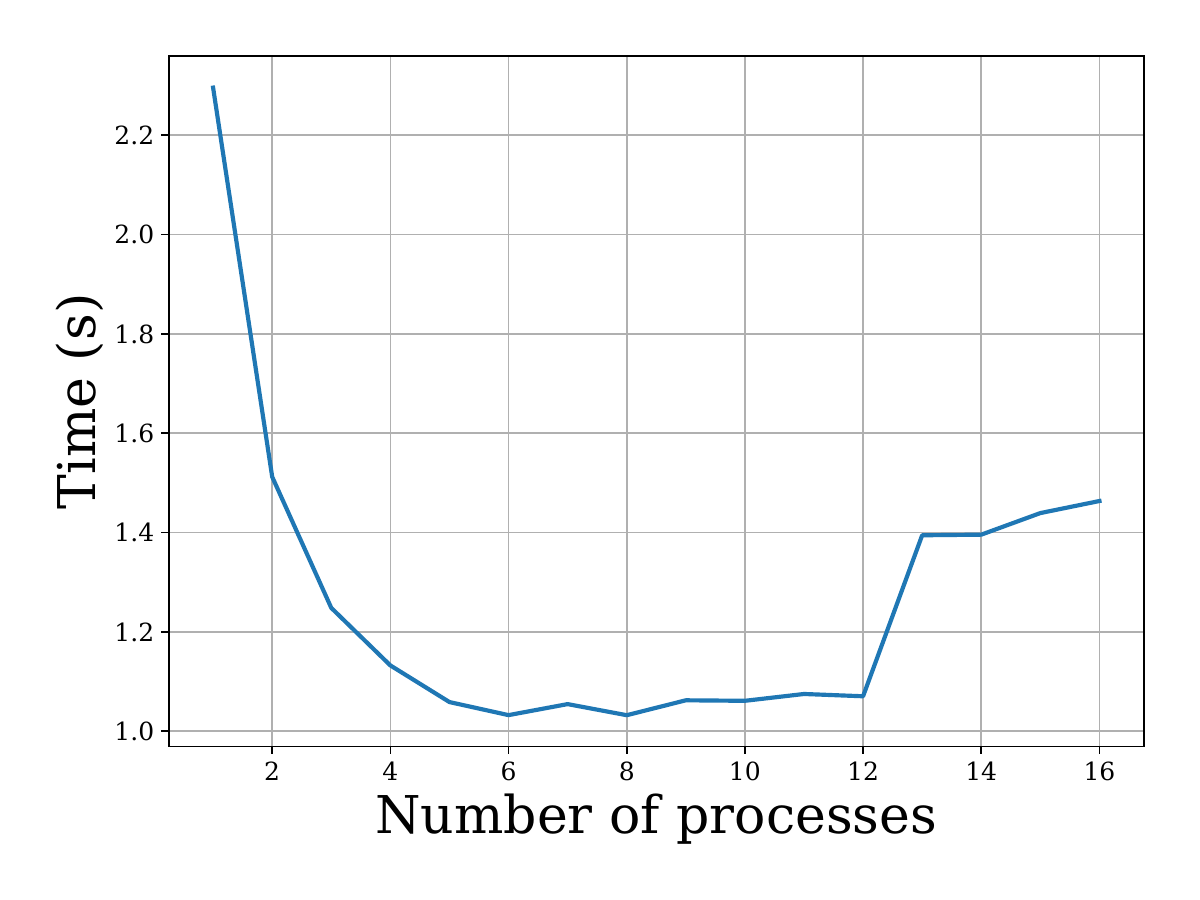}
\end{subfigure}
\begin{subfigure}[t]{0.32\textwidth}
\includegraphics[width=\linewidth]{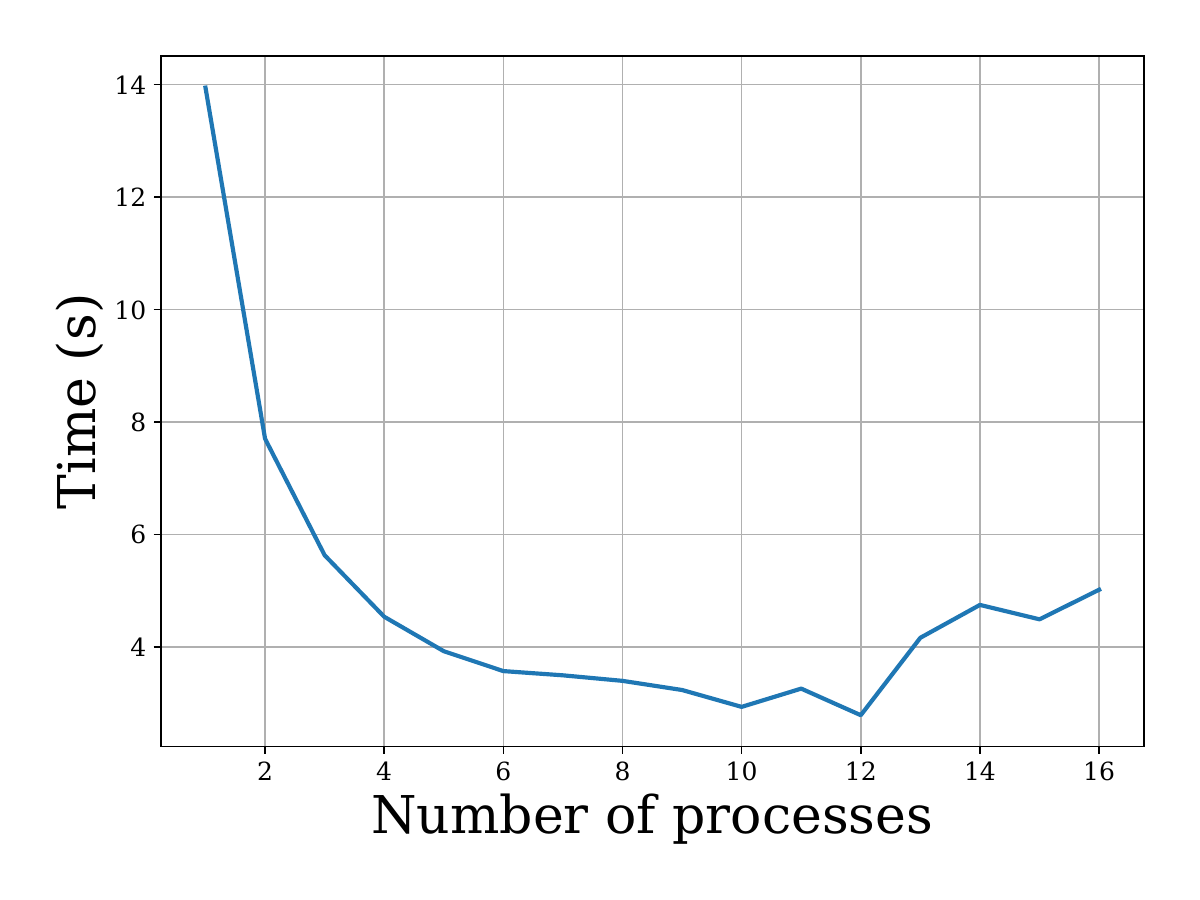}
\end{subfigure}
\begin{subfigure}[t]{0.32\textwidth}
\includegraphics[width=\linewidth]{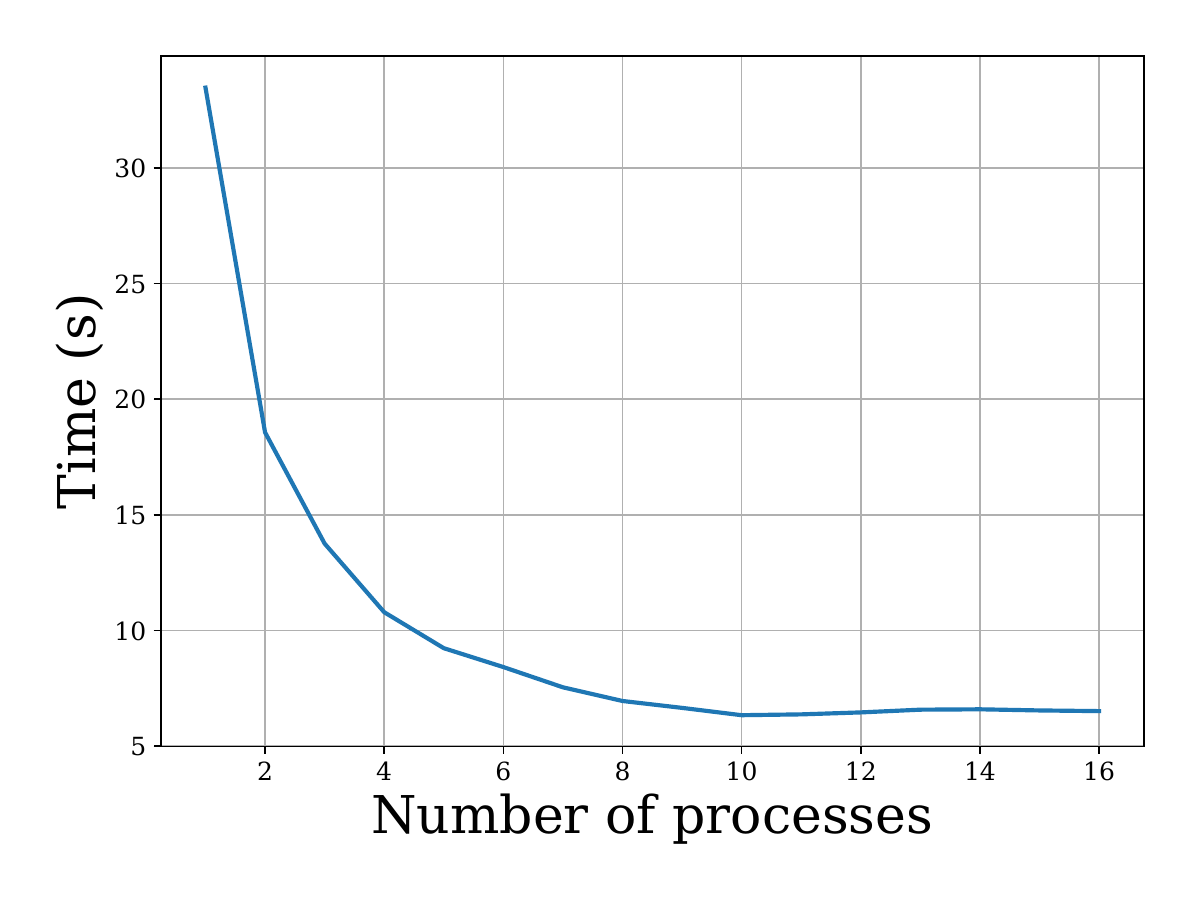}
\end{subfigure}
\caption{Running time as function of the process count. From left to right: Linear Convection, Poisson, Heat Diffusion.}
\label{fig:performance}
\end{figure}

\paragraph{Verification and Compilation Speed} It takes about 6 minutes to verify all of the benchmarks and under 20 seconds to compile the code to Python using Dafny's built-in compiler.
\textsc{Heat Diffusion} is the most expensive benchmark when it comes to verification time, as can be seen in Table~\ref{tab:effort}.
This is primarily due to the need to map the message payload between 2D and 1D representations multiple times during verification to ensure that the data being sent is contiguous in memory.

\subsection{Runtime Performance}

Given the considerable effort that goes into proving an MPI program correct, we want to be certain that the programs we write provide the desired performance boost when compared with corresponding sequential implementations.
Thus, we transpiled all three benchmarks to Python using the Dafny compiler and ran them with different numbers of processes. 
\footnote{
For simplicity of reasoning about the program's result, our current proofs assume that the domain size is divisible by the number of processes.
This assumption does not affect the communication pattern or the proof of deadlock freedom.
When the domain size is not divisible by the number of processes, we round it up to the nearest divisible value before measuring running time.
In our benchmarks, all domain sizes are divisible by 16, so experiments with 1, 2, 4, 8, and 16 processes are unaffected by this adjustment.
}
Figure~\ref{fig:performance} shows the results of this experiment.
All time measurements are the median of three runs. 
We use the sequential implementation for the one-process run.
As expected, increasing the number of processes reduces the running time, achieving up to a fivefold speedup over the sequential version in the Heat Diffusion benchmark.
However, as is typical, increasing the number of processes past a certain threshold leads to diminishing returns, and, for  Linear Convection and Poisson, the running time even goes up once we assign 12 or more processes to the MPI program. 
We hypothesize that the running time increases because we ran the experiments on a machine with 16 cores, so giving 12 cores to the program might have interfered with the other processes running in the background, thereby increasing the total running time. 
This effect is most noticeable in the Linear Convection benchmark because this is the most lightweight of the benchmarks in terms of running time.

\section{Discussion and Future Work}
\label{sec:discussion}

\paragraph{Comparison with a Proof Assistants}

An initial prototype of the DafnyMPI system was partially formalized in the Rocq proof assistant. 
To achieve this, we developed a domain-specific language similar to the core
calculus in Section~\ref{sec:proof:core}, alongside specifications for an interpreter. 
While we still believe that this approach is feasible, during the verification of the linear convection benchmark in Rocq we found that the proof engineering burden could be lessened considerably using the constructs built-in to Dafny. 
The benefit is most apparent when comparing the length of the partial proof of deadlock freedom for the Linear Convection example in Rocq (about 1000 lines, even with many admitted proof goals), which exceeded that of the entire program correctness proof in DafnyMPI (694 lines).
While it could be both feasible and useful to implement reasoning similar to DafnyMPI with other verification frameworks, we leave this task to future work.

\paragraph{Floating-point Arithmetic}

Throughout our codebase, we use Dafny's \code{real} type to model floating-point numbers.
While common in verification, this approach does not provide a precise model of IEEE-754 or any other floating-point standard because it does not account for floating-point underflow, and does not prevent division-by-zero errors that may result from it.
Several studies exist on automating the verification of programs that use floating-point arithmetic~\cite{floating,floating2}, and it may be a possible direction for future work on DafnyMPI.

\paragraph{Running Time}

Because we model most array operations in Dafny, the runtime performance suffers compared to what could be achieved with native libraries such as Python's Numpy.
This is, in part, by design: we chose to minimize the trust base by only relying on Numpy for single-element set and get operations.
In principle, it should be possible to replace our implementations of such operations as element-wise addition, slicing, and rolling, with corresponding Numpy operations at runtime, which should result in a significant performance boost. 

\section{Related work}

There are several threads of related work.

\paragraph{Dynamic Analysis and Model Checking for MPI}

Researchers have developed a number of verification systems for MPI.
Practical tools such as ISP~\cite{isp}, its successor DAMPI~\cite{dampi}, and MOPPER~\cite{mopper} execute MPI programs under controlled schedulers or otherwise analyze the results of a past execution.
Others, including TASS~\cite{tass}, MPI-SV~\cite{mpi-sv}, and CIVL-C~\cite{civl}, combine symbolic execution with model checking to explore the space of possible message interleavings.
These approaches do not require the same amount of
proof engineering as DafnyMPI, but they are subject to potential state
explosion, even though techniques exist that allow pruning the search space~\cite{verifympi2}.

\paragraph{Lock Orders}

Ordering on locks has been used previously to establish deadlock freedom in several verification contexts. 
A system that enforces a global order on lock acquisitions and receives has been implemented~\cite{leino} for the Chalice verifier and an extension that supports condition variables~\cite{hamin} has been implemented for VeriFast.
Recent work~\cite{horder} uses more complex ordering topologies for managing higher-order locks.
These studies focus on locks in the context of shared memory concurrency and dynamic thread creation, where communication obligations can be established or transferred when a new thread is forked.
In contrast, we target the MPI setting, where the number of processes is arbitrary but fixed throughout execution, meaning the programmer can describe the entire communication topology as a function of process ID without the need to explicitly manage communication for each process. 
Moreover, data transfer in MPI can occur asynchronously relative to the initiating process, a behavior explicitly captured by our operational semantics via the message payload buffer and the separation between the initialization and completion stages of each point-to-point operation.
Finally, DafnyMPI allows simultaneous sends and receives and provides an easy way to integrate collective communication operations.

\paragraph{Concurrency with Dafny}

Several projects have explored concurrency in the context of Dafny.

The DafnyInfoFlow project~\cite{smith-dafny-concurrency} has added support for concurrent reasoning, also using rely-guarantee, but its primary goal is to prevent leakage of information from private variables.
In contrast, DafnyMPI focuses on deadlock freedom, termination, and functional equivalence.

Armada~\cite{armada} is a tool and a programming language whose operational semantics is modeled in Dafny and which allows verification of C-like concurrent code. 
Because Armada is not a Dafny library but a distinct language, Armada programs cannot directly reuse Dafny constructs the way DafnyMPI can.
Moreover, Armada explicitly targets low-level, shared-memory concurrency, which is a fundamentally different setting from MPI.

\paragraph{Other approaches}

Concurrent separation logic (CSL), as exemplified by
Iris~\cite{iris} and Steelcore~\cite{steelcore}, extends Hoare logic with
shared-memory semantics such as resources (heap), resource ownership, and the
separating implication for heap predicates~\cite{ohearn-csl}. 
While CSL models the heap directly, DafnyMPI abstracts away these details, reducing the proof engineering burden but limiting applicability to MPI programs.
CSL also lacks support for liveness or deadlock-freedom, typically requiring a supplemental logic, as in LinearActris~\cite{jacobs-linearactris}, to establish these properties.

Several researchers have developed type systems that aim to prevent common
concurrency bugs such as data races and deadlock. Examples include
TyPiCal~\cite{deadlock-free-session-types}, based on session types;
CLL~\cite{cll}, based on dependent types; and Asynchronous Liquid Separation
Types~\cite{async-liquid}, based on refinement types. In contrast to DafnyMPI,
these approaches do not reason about program equivalence.

Finally, researchers have developed a variety of other formal verification
frameworks for concurrent programs, such as VCC\cite{vcc},
VerCors\cite{vercors}. A major difference between DafnyMPI
and these frameworks lies in their core design. VCC requires annotating C code with
invariants and pre- or postconditions; Vercors verifies Java code with
separation logic annotations.

\section{Conclusion}

We have introduced DafnyMPI, a novel verification library that brings formal reasoning about message-passing interface (MPI) parallel programs into Dafny, a language designed for sequential code.
Our approach enables verification of core MPI properties, including deadlock freedom and functional equivalence with sequential specifications, all without requiring custom concurrency logics.
By leveraging Dafny's built-in verification capabilities and layering MPI-specific invariants on top, we believe our approach achieves a balance of rigor, accessibility, and scalability.

Our formal model demonstrates that a small set of preconditions on message tags, barriers, and buffer usage suffices to guarantee deadlock freedom.
These preconditions are enforced through a core calculus and a set of Dafny assertions.
Experimental results on PDE solvers demonstrate the viability of our method: although proof engineering remains nontrivial, the verified parallel implementations exhibit significant speedups over their sequential counterparts.

Looking forward, we envision extensions of DafnyMPI to support more expressive MPI features (e.g., communicators, wildcards), integrate more seamlessly with scientific programming ecosystems, and further reduce proof-engineering effort.
We hope that DafnyMPI will serve as a foundation for future work on verified high-performance computing and encourage the adoption of formal methods in domains where correctness is critical but traditional techniques are impractical.

\section*{Data-Availability Statement}

DafnyMPI and associated benchmarks are available as a Zenodo artifact~\cite{artifact}.

\begin{acks}
We thank the anonymous reviewers for their valuable feedback. This research was partially funded by the U.S. National Science Foundation under Award Nos. 2313998 and 2513872.
\end{acks}

\bibliographystyle{ACM-Reference-Format}
\bibliography{ref}

\end{document}